\newtheorem{theorem}{Theorem}
\newtheorem{prop}[theorem]{Proposition}
\DeclareMathOperator{\Tr}{Tr}
\newcommand{\ie}{\emph{i.e.}{}}
\newcommand{\eg}{\emph{e.g.}{}}
\newcommand{\CC}{\mathbb{C}}
\newcommand{\Or}{\mathcal{O}}
\newcommand{\norm}[1]{\left\lVert#1\right\rVert}
\newcommand{\I}{\imath} 
\newcommand{\barint}{\kern4pt \raise3.4pt\hbox{\vrule height.6pt
    width7pt} \kern-11pt \int}
\newcommand{\bvec}[1]{\mathbf{#1}}
\newcommand{\vc}{\bvec{c}}
\newcommand{\ve}{\bvec{e}}
\newcommand{\vj}{\bvec{j}}
\newcommand{\vk}{\bvec{k}}
\newcommand{\vn}{\bvec{n}}
\newcommand{\vr}{\bvec{r}}
\newcommand{\vs}{\bvec{s}}
\newcommand{\CS}{\mathcal{C}}
\newcommand{\FS}{F}
\newcommand{\KS}{\mathcal{K}} 
\newcommand{\MS}{M}
\newcommand{\NS}{N}
\newcommand{\RS}{\mathcal{R}} 
\newcommand{\Rone}[1]{{#1}}
\newcommand{\Rtwo}[1]{{#1}}
\journal{Journal of Computational Physics}
\begin{document}

\begin{frontmatter}

\title{SCDM-k: Localized orbitals for solids via selected columns of the density matrix}

\author[ucb]{Anil Damle\corref{cor}}
\ead{damle@berkeley.edu}
\cortext[cor]{Corresponding author}

\author[ucb,lbl]{Lin Lin}
\ead{linlin@math.berkeley.edu}

\author[smath,icme]{Lexing Ying}
\ead{lexing@math.stanford.edu}

\address[ucb]{Department of Mathematics, University of California,
Berkeley, Berkeley, CA 94720}

\address[lbl]{Computational Research Division, Lawrence Berkeley National
Laboratory, Berkeley, CA 94720}

\address[icme]{Institute for Computational and Mathematical Engineering,
Stanford University, Stanford, CA 94305}

\address[smath]{Department of Mathematics, Stanford University, Stanford, CA 94305}

\begin{abstract}
  The recently developed selected columns of the density matrix (SCDM)
  method [J. Chem. Theory Comput. 11, 1463, 2015] is a simple, robust,
  efficient and highly parallelizable method for constructing localized
  orbitals from a set of delocalized Kohn-Sham orbitals for insulators
  and semiconductors with $\Gamma$ point sampling of the Brillouin zone.
  In this work we generalize the SCDM method to Kohn-Sham density
  functional theory calculations with $\vk$-point sampling of the
  Brillouin zone, which is needed for more general electronic structure
  calculations for solids. 
  We demonstrate that our new method, called SCDM-k, is by construction
  gauge independent and a natural way to describe localized orbitals.
  SCDM-k computes localized orbitals without the use of an optimization
  procedure, and thus does not suffer from the possibility of being
  trapped in a local minimum.  Furthermore, the computational complexity
  of using SCDM-k to construct orthogonal and localized orbitals scales
  as $\Or(\NS\log \NS)$ where $\NS$ is the total number of $\vk$-points
  in the Brillouin zone. SCDM-k is therefore efficient even when a large
  number of $\vk$-points are used for Brillouin zone sampling.  We
  demonstrate the numerical performance of SCDM-k using systems with
  model potentials in two and three dimensions.
\end{abstract}

\begin{keyword}
Kohn-Sham density functional theory \sep Localized orbitals \sep
Brillouin zone sampling \sep Density matrix \sep Interpolative decomposition





\end{keyword}

\end{frontmatter}


\section{Introduction}

Kohn-Sham density functional theory
(DFT)~\cite{HohenbergKohn1964,KohnSham1965} is the most widely used
electronic structure theory for molecules and systems in condensed
phase. The Kohn-Sham orbitals (a.k.a. Kohn-Sham wavefunctions) are eigenfunctions of the Kohn-Sham
Hamiltonian. We refer to the span of a given set
of Kohn-Sham orbitals as the Kohn-Sham invariant subspace.  These orbitals are in general delocalized, \ie~each orbital
has significant magnitude across the entire
computational domain. However, information about atomic structure and chemical
bonding, which is often localized in real space, may be difficult
to interpret from delocalized Kohn-Sham orbitals. The connection
between localized and delocalized information is made possible by
a \textit{localization} procedure. 

A
localization procedure finds a set of orbitals that
are localized in real space, and span the Kohn-Sham invariant subspace. Examples of widely
used localization schemes include Boys
localization~\cite{FosterBoys1960} mostly in the context of chemistry,
and maximally localized Wannier functions
(MLWFs)~\cite{MarzariVanderbilt1997,WannierReview} mostly in the context of
physics and materials science. The localized orbitals are not only
useful for analyzing chemical and materials systems, but can also serve
as powerful computational tools for hybrid functional
calculations~\cite{WuSelloniCar2009,GygiDuchemin2012}, theory of
polarization of crystalline solids based on Berry-phase
calculations~\cite{King-SmithVanderbilt1993}, interpolation of band
structure~\cite{MarzariVanderbilt1997}, linear scaling DFT
calculations~\cite{Goedecker1999}, and excited state
theories~\cite{UmariStenuitBaroni2009,UmariStenuitBaroni2010} among
others. Because of the wide range of applications for localized
orbitals, several other localization methods have also been proposed in
the past few
years~\cite{Gygi2009,ELiLu2010,OzolinsLaiCaflischEtAl2013,AquilantePedersenMerasEtAl2006,SCDM}.

The potential for constructing localized orbitals from delocalized
Kohn-Sham orbitals can be justified physically by the
``nearsightedness'' principle for electronic matter of finite HOMO-LUMO
gap~\cite{Kohn1996,ProdanKohn2005}. The nearsightedness principle can be
more rigorously stated as the single particle density matrix (DM)
being exponentially localized along the off-diagonal direction in its real
space
representation~\cite{BenziBoitoRazouk2013,Kohn1996,Blount,Cloizeaux1964a,Cloizeaux1964b,
Nenciu}. Based on the exponential decay of the DM in the real space, we
have recently developed the selected columns of the density matrix
(SCDM) method~\cite{SCDM} as a new way to
construct localized orbitals. The method is simple, robust, efficient
and highly parallelizable. As the name suggests, the localized orbitals
are obtained directly from a column selection procedure implicitly
applied to the density matrix. Hence, the locality of these columns is a
direct consequence of the locality of the density matrix. In contrast
with Boys localized orbitals or MLWFs, our method does not attempt to
minimize a given localization measure via a minimization procedure.
Consequently, our method does not require any initial guess of localized
orbitals, and its cost is predetermined for a given problem size. It
also avoids some of the potential problems associated with a
minimization scheme, such as getting stuck at a local minimum.  

\Rtwo{For isolated molecules, the number of electrons is relatively
small. On the other hand, the number of electrons in solids can reach
macroscopic scale, and the calculation must be simplified. Using the
fact that the potential and the electron density are periodic with
respect to the unit cell of a solid system, one can perform a Bloch
decomposition of the Kohn-Sham Hamiltonian. The wavefunctions for each
Bloch decomposed Hamiltonian satisfy twisted boundary conditions indexed
by a vector $\vk$ belonging to the so-called Brillouin zone. In order to compute
physical quantities such as the electron density and total energy in
Kohn-Sham DFT, the Brillouin zone needs to be represented by a number of
discrete $\vk$-points. This is called Brillouin zone sampling.  We
refer readers to section 3 as well as ~\cite{Martin2004} for more details of the Bloch
decomposition and Brillouin zone sampling. In particular, the scheme
using one special $\vk$ point, denoted by $\Gamma=(0,0,0)^T$, to sample the
Brillouin zone is called the $\Gamma$ point sampling scheme.}
The SCDM procedure proposed in Ref.~\cite{SCDM} is
applicable to Kohn-Sham DFT calculations for isolated molecules,
and for solids with $\Gamma$ point sampling of the Brillouin zone. 

In many physics and materials science applications such as chemical bonding
analysis of complex solids, band structure interpolation, and Berry-phase
theories, localized orbitals need to be constructed from Kohn-Sham
orbitals obtained from a set of $\vk$-points in the Brillouin zone other
than the $\Gamma$ point. \Rtwo{The number of $\vk$ points needed 
is system dependent, and can range from tens to tens of thousands.}
The
common practice for Brillouin zone sampling is to diagonalize the
Kohn-Sham Hamiltonian matrix for each $\vk$-point independently. Since
the Kohn-Sham Hamiltonian matrix is in general complex Hermitian, the
Kohn-Sham orbitals obtained for each $\vk$-point can acquire an
arbitrary phase, often referred to as the ``gauge'' of the orbitals. For
degenerate orbitals (i.e. orbitals with the same eigenvalue) the gauge
can be an arbitrary unitary matrix.
The widely used method for finding MLWFs~\cite{WannierReview} is
gauge-dependent. It involves the differentiation operator with respect to
the Brillouin zone index $\vk$. Therefore a gauge transformation needs
to be performed prior to the minimization procedure to smooth the gauge, so
that the differentiation operator is well
defined~\cite{MarzariVanderbilt1997}. Such a gauge smoothing procedure is
not unique. After the gauge transformation, the computation of MLWFs
requires the minimization of a nonlinear, non-convex energy functional. Therefore,
the iterative procedure may get stuck at local minimum. Furthermore, the
nonlinear energy functional and its solution can depend heavily on the
initial guess.  This is especially the case for materials where within
the unit cell there is complex atomic structure.

In this paper, we generalize the SCDM procedure for finding localized
orbitals to solids with $\vk$-point sampling.  The new method, which
we refer to as SCDM-k, has a few notable features. First, the
localized orbitals are obtained directly from columns of the density
matrix, which is a gauge invariant quantity. Thus, SCDM-k does not
require a gauge transformation, and the result is independent of the
choice of the gauge. Second, SCDM-k is a direct method that does not
involve an iterative optimization procedure and thus avoids getting
stuck at a local minimum. Third, SCDM-k has only one parameter to
adjust (size of the local supercell), which is introduced to
improve the efficiency, and our numerical experiments indicate that the
quality of the localized orbitals is relatively insensitive to the
choice of this parameter. Finally, the SCDM-k procedure is highly
efficient. The complexity for generating non-orthogonal and orthogonal
localized orbitals is $\Or(\NS)$ and $\Or(\NS\log \NS)$, respectively
where $\NS$ is the total number of $\vk$-points in the Brillouin zone.
Therefore the method is suitable even when a large number of
$\vk$-points are used for sampling the Brillouin zone.

The paper is organized as follows. Section \ref{sec:prelim} outlines
the notation and some concepts that will be used throughout this paper. Section
\ref{sec:KS} then outlines the procedure for solving Kohn-Sham DFT with
Brillouin zone sampling. After briefly reviewing the SCDM procedure for
the $\Gamma$ point case, we describe in section~\ref{sec:scdm} 
the SCDM-k procedure for  $\vk$-point sampling. Finally, section
\ref{sec:numer} presents numerical results in two and three dimensions
using a model potential and is followed by concluding remarks and future
directions in section \ref{sec:conclusion}.

\section{Preliminaries}
\label{sec:prelim}
\subsection{Notation} \label{sec:notation}
A relatively self-contained discussion of $\vk$-point sampling requires 
the introduction of a considerable amount of notation. 
Table \ref{tab:nomen} summarizes the requisite notation that we will be
using throughout this manuscript. We also provide a brief overview of
some of the more pervasive notation used throughout the rest of the
text, and introduce the remainder as needed.  In the discussion below,
without loss of generality we assume the dimension $d=3$, and the
formulation can be easily extended to $d=1$ or $d=2$.

We denote a unit cell by $\Omega^{u}$.  The global supercell, denoted by
$\Omega^{g}$, contains $N_{1}\times N_{2}\times N_{3}$ unit cells
equipped with periodic boundary conditions. 
Due to the translational
symmetry, the problem on a global supercell can be
equivalently decomposed into $N_{1}\times N_{2}\times N_{3}$
\textit{independent} problems on a unit cell, each represented by a
$\vk$-point in the Brillouin zone using a Monkhorst-Pack
grid~\cite{MonkhorstPack1976}.  One important component of the SCDM-k
method is the so-called local supercell $\Omega^\ell$ associated with
the unit cell $\Omega^{u}$. A local supercell
is comprised of $N^\ell_{1}\times N^\ell_{2}\times N^\ell_{3}$ adjacent
unit cells  ($N^{\ell}_{i}\le N_{i},i=1,2,3$). Figure \ref{fig:kpoint}
illustrates the described relationship between the unit cell, local
supercell, and global supercell in a two dimensional setting.  

Computationally, each $\vk$-point problem can be solved with any
suitable discrete basis set. Below we assume a planewave basis set is
used with $M_{1}\times M_{2}\times M_{3}$ grid points in reciprocal
space. This corresponds to a set of grid points of the same size in real
space discretizing $\Omega^{u}$ uniformly. With a slight abuse of
notation, this set of discrete grid points in the unit cell is also denoted by
$\Omega^{u}$. A similar abuse of notation is used for the global supercell
$\Omega^{g}$ and the local supercell $\Omega^{\ell}$. 

\Rtwo{To present the algorithms generally, we allow for distinct numbers
of points in each of the three dimensions. This is the case for both the
real space grid of the unite cell, and the $\vk$-point grid. However,
often the asymptotic computational cost will only be a function of the
product of the number of points per dimension. Therefore, we use capital
letters with subscripts such as $N_{1},N_{2},N_{3}$ to define the number
of points per dimension and the same capital letter sans subscript such
as $N$ to denote the total number of points. In addition, we use
calligraphic letters such as $\KS$ to denote sets. These will be used for operations such as general indexing of matrices or summation.}

\begin{table}
  \centering
  \begin{tabular}{l|p{8cm}}
     \hline
     \hline
     Notation & Description\\
     \hline
     $\Omega^{u}$ & Unit cell; Collection of indices corresponding to
     uniform real space grid points in the unit cell\\
     \hline
     $\Omega^{\ell}$ & Local supercell; Collection of indices
     corresponding to uniform real space grid points in a local supercell\\
     \hline
     $\Omega^{g}$ & Global supercell; Collection of indices
     corresponding to uniform real space grid points in a global supercell\\
     \hline
     \hline
     $\I$ & Imaginary unit $\sqrt{-1}$ \\
     \hline
     $\ve_{1},\ve_{2},\ve_{3}$ & Unit vector along each dimension\\
     \hline
     $\vs=(s_1,s_2,s_3)$ & Shift vector of the Monkhorst-Pack grid\\
     \hline
     $\vk=(k_1,k_2,k_3)$ & A $\vk$-point \\
     \hline
     $M_{1},M_{2},M_{3}$ & Number of uniform grid points in the unit
     cell along each dimension\\
     \hline
     $N_{1},N_{2},N_{3}$ & Number of $\vk$-points for Brillouin zone
     sampling along each dimension\\
     \hline
     $N_{1}^{\ell},N_{2}^{\ell},N_{3}^{\ell}$ & Number of unit cells
     in a local supercell along each dimension\\
     \hline
     $L_{1},L_{2},L_{3}$ & Length of the unit cell $\Omega^{u}$ along
     each dimension \\
     \hline
     $\MS$ & $M_{1}\times M_{2}\times M_{3}$\\
     \hline
     $\NS$ & $N_{1}\times N_{2}\times N_{3}$\\
     \hline
     $\NS^{\ell}$ & $N^{\ell}_{1}\times N^{\ell}_{2}\times N^{\ell}_{3}$\\
     \hline
     $\KS$ & Collection of all the $\vk$-points of the Monkhorst-Pack grid corresponding to the
     global supercell\\
     \hline
     $\KS^\ell$ & Collection of all the $\vk$-points of the Monkhorst-Pack grid corresponding to a
     local supercell\\
     \hline
     $n_{b}$ & Number of wavefunctions in the unit cell\\
     \hline
     $\psi_{b,\vk}$ &  A Kohn-Sham orbital on the global supercell
     $\Omega^{g}$, both at the continuous and the discrete level\\
     \hline
     $\psi^{\ell}_{b,\vk}$ &  A Kohn-Sham orbital on the
     local supercell $\Omega^{\ell}$, both at the continuous and the discrete level\\
     \hline
     $u_{b,\vk}$ &  Periodic part of a Kohn-Sham orbital on the unit
     cell $\Omega^{u}$, both at the continuous and the discrete level\\
     \hline
     $P_{\vk}$ & Density matrix corresponding a particular $\vk$ point
     on the global supercell $\Omega^{g}$ at the discrete level\\
     \hline
     $P$ & Total density matrix on the global supercell
     $\Omega^{g}$ at the discrete level\\
     \hline
     $\CS$ & Collection of indices for the selected columns on the unit
     cell $\Omega^{u}$ \\
     \hline
     $\CS^{g}$ & Collection of all indices for the selected columns on the
     global supercell $\Omega^{g}$ \\
     \hline
     \hline
  \end{tabular}
  \caption{Notation used in the paper}
  \label{tab:nomen}
\end{table}

\begin{figure}[ht!]
\centering
\includegraphics[width=.5\textwidth]{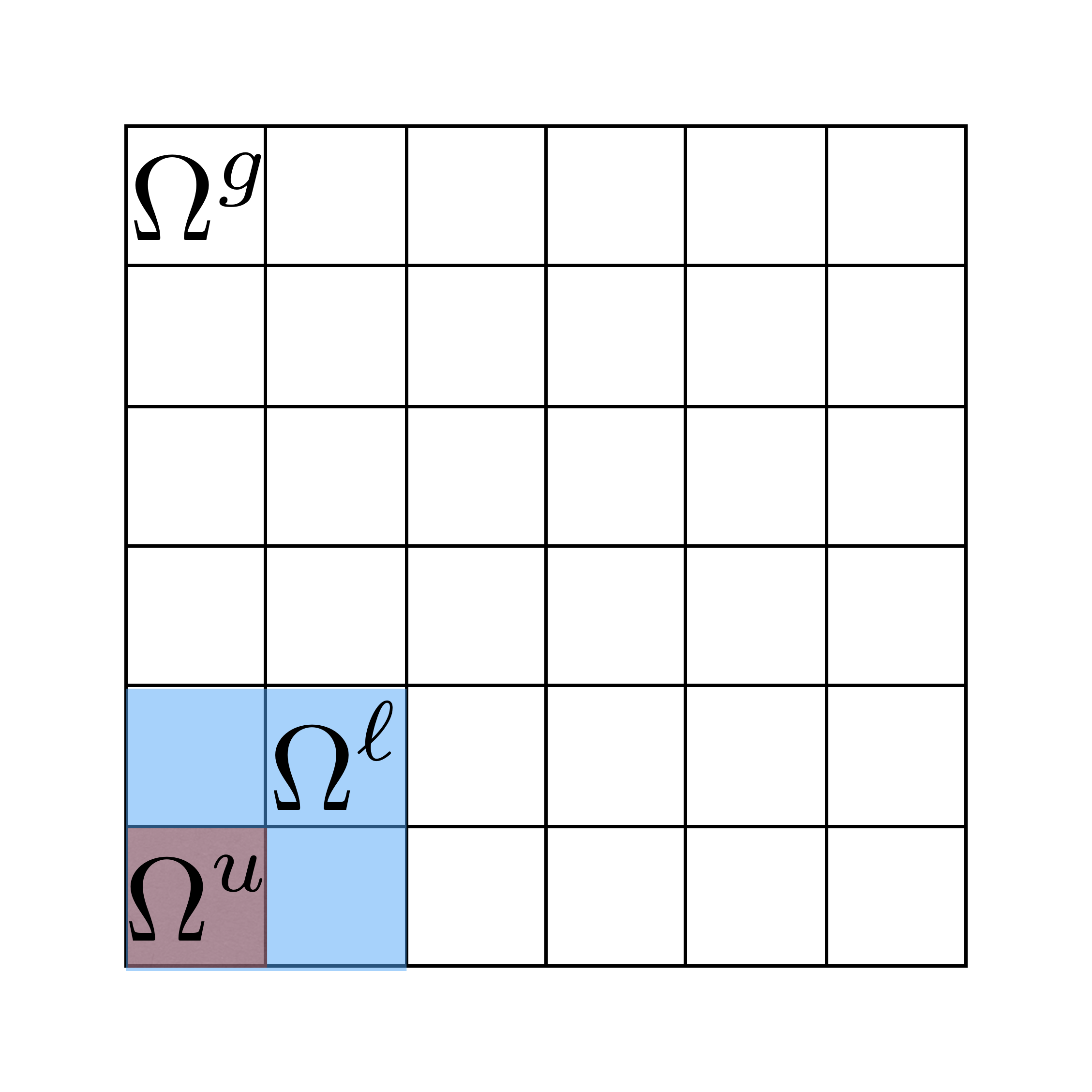}
\caption{\label{fig:kpoint} Illustration of the relationship in two
  dimension ($d=2$) between the unit cell (shaded red), local
  supercell (shaded blue), and global supercell corresponding to $N_1
  = N_2 = 6$ and $N_1^{\ell} = N_2^{\ell} = 2.$}
\end{figure}

\subsection{\Rone{Sub-selection of matrices}}\label{sec:linalg}
Because we deal with very large matrices that exhibit structure due to
the underlying problem set up, it is very useful for us to associate
quantities from the physical problem with portions of matrices.
Therefore, we use set subscripts to denote sub-selection of rows and
columns of matrices. For example, $A_{\{1,2\},\{3,4\}}$ is a submatrix
of $A$ consisting of the intersection of rows one and two with columns
three and four. We use ``$:$'' as a subscript to denote that all rows or columns are considered, \ie~$A_{:,1}$ denotes the first column of $A$.

\subsection{\Rone{Column-pivoted QR factorizations}} \label{sec:QRCP}
Because they play a central role in the development of our algorithms,
we briefly introduce column-pivoted factorizations. Consider $A \in
\mathbb{R}^{n_b\times M}$, where the sizes have been chosen to
coincide with the matrices we will actually be performing these
factorizations on later. A QR factorization with column pivoting (QRCP)
of $A$ follows the algorithm in \cite{businger1965linear} to compute an
$M \times M$ permutation matrix $\Pi,$ a $n_b \times n_b$ orthogonal
matrix $Q$, a $n_b \times n_b$ upper triangular matrix $R$ and a $n_b
\times (M - n_b)$ matrix $T$ such that
\[
A\Pi = Q \begin{bmatrix} R & T\end{bmatrix}.
\]

The column pivoting algorithm in \cite{businger1965linear} is a greedy procedure to try and ensure that $R$ is as well conditioned as possible and that its singular values do not differ too much from those of $A$. If we let $\CS$ denote the original indices of the $n_b$ columns permuted to the front by $\Pi$ then we observe that $A_{:,\CS} = QR$. Hence, if $R$ is well conditioned we expect these columns to form a well conditioned basis for the range of $A$. Finally, What we have presented here is a very narrow definition of such factorizations, and we direct the reader to \cite{chandrasekaran1994rank} and \cite{gu1996efficient} for a more through treatment of such algorithms. 

\section{Kohn-Sham density functional theory with Brillouin zone
sampling} \label{sec:KS}

In this section we provide a relatively self-contained description of
Kohn-Sham DFT, and focus particularly on Brillouin
zone sampling due to the periodic structure. A more detailed discussion
can be found in, $\eg\text{,}$~\cite{Martin2004}.

\subsection{Continuous formulation}

For a crystalline solid modeled by a global supercell $\Omega^{g}$
consisting of $\NS$ unit cells with each unit cell containing $2n_{b}$
electrons (the factor of two comes from spin), the Kohn-Sham equations are~\cite{KohnSham1965} 
\begin{equation}
  H \psi_{\alpha}(\vr) = -\frac12 \nabla^2 \psi_{\alpha}(\vr) + V(\vr) \psi_{\alpha}(\vr) =
  \varepsilon_{\alpha} \psi_{\alpha}(\vr),\quad \vr\in \Omega^{g}, \quad
  \alpha=1,\ldots,n_{b}\NS.
  \label{eqn:KS}
\end{equation}
Each eigenfunction $\psi_{\alpha}$ satisfies the Born-von Karman (BvK)
boundary condition, which is the periodic boundary condition on
$\Omega^{g}$
\begin{equation}
  \psi_{\alpha}(\vr+N_{i}L_{i}\ve_{i}) = \psi_{\alpha}(\vr), \quad \forall \vr\in \Omega^{g},
  \quad i=1,2,3.
  \label{eqn:BvK}
\end{equation}
Using the BvK boundary condition, all eigenvalues
$\{\varepsilon_{\alpha}\}$ are real, and all eigenfunctions
$\{\psi_{\alpha}\}$ are orthogonal to each other under the $L^{2}$ inner
product. We assume the eigenvalues are ordered in a non-descending
manner. 

Kohn-Sham DFT requires solving for the $n_{b}\NS$
eigenfunctions associated with the lowest eigenvalues
$\{\varepsilon_{\alpha}\}_{\alpha=1}^{n_{b}\NS}$.
$\varepsilon_{\alpha}$ is called a Kohn-Sham orbital energy, and
$\psi_{\alpha}$ is called a Kohn-Sham orbital or a Kohn-Sham
wavefunction.  In Kohn-Sham DFT, $V(\vr)$ is the self-consistent
single particle potential, and self-consistency is usually reached
through an iterative procedure. Here without loss of generality we
assume self-consistency has been reached.  We also assume a
pseudopotential is used so $V(\vr)$ is smooth and can be discretized
using uniform grid points. Nonlocal pseudopotentials are neglected for
simplicity of notation, and do not introduce any extra numerical difficulty when
added.  We refer readers to~\cite{Martin2004} for more detailed
explanation of the terminology.

For crystalline solids, $V(\vr)$ is a periodic function in $\Omega^{u}$, \ie~
\begin{equation}
  V(\vr+L_{i}\ve_{i}) = V(\vr), \quad \forall \vr\in \Omega^{g},\quad
  i=1,2,3.
  \label{eqn:Vperiod}
\end{equation}
The Bloch theory (or Bloch-Floquet theory) states that the $n_{b}\NS$ Kohn-Sham
wavefunctions can be relabeled using two indices $\alpha=(b,\vk)$, so
that $\psi_{\alpha}\equiv \psi_{b,\vk}$ can
be decomposed into the form
\begin{equation}
  \psi_{b,\vk}(\vr) = e^{\I \vk\cdot \vr} u_{b,\vk}(\vr),
  \label{eqn:Bloch}
\end{equation}
where $u_{b,\vk}(\vr)$ is the periodic part of $\psi_{b,\vk}(\vr)$ satisfying
\begin{equation}
  u_{b,\vk}(\vr+L_{i}\ve_{i}) = u_{b,\vk}(\vr), \quad \forall \vr\in \Omega^{g},
  \quad i=1,2,3.  
  \label{}
\end{equation}
Using the Bloch decomposition, Eq.~\eqref{eqn:KS} can
be written in terms of $u$ on each unit cell $\Omega^{u}$ as
\begin{equation}
  -\frac12 (\nabla+i\vk)^2 u_{b,\vk}(\vr) + V(\vr) u_{b,\vk}(\vr) =
  \varepsilon_{b,\vk} u_{b,\vk}(\vr), \quad \vr\in \Omega^{u},
  \label{eqn:KSBloch}
\end{equation}
Where each $\vk$ is a
point in the first Brillouin zone defined as 
\[
\mathcal{B} = \left[-\frac{\pi}{L_{1}},\frac{\pi}{L_{1}}\right]\times
\left[-\frac{\pi}{L_{2}},\frac{\pi}{L_{2}}\right]\times
\left[-\frac{\pi}{L_{3}},\frac{\pi}{L_{3}}\right].
\]
For each $\vk$-point, $b=1,\ldots,n_{b}$, \ie~$n_{b}$ is the number of
wavefunctions per $\vk$-point. For simplicity we will drop the $b$ subscript when
describing properties that hold for each $u_{b,\vk}(\vr)$,
$b=1,\ldots,n_{b}$. There are a few $\vk$-points in the Brillouin zone
that play special roles in crystallography and also in numerical
computation. The most important one is the $\Gamma$ point, which is the
origin of the Brillouin zone.
 
In order to solve Eq.~\eqref{eqn:KSBloch}, the Brillouin zone
$\mathcal{B}$ needs to be discretized.  One of the most widely used
discretization schemes is the so-called Monkhorst-Pack
grid~\cite{MonkhorstPack1976}, which uses a uniform discretization of
$\mathcal{B}$. The discretized set of $\vk$-points is
\begin{equation}
  \KS = \left\{ \left( \frac{2\pi j_{1}}{N_{1}L_{1}}, \frac{2\pi
  j_{2}}{N_{2}L_{2}},
  \frac{2\pi j_{3}}{N_{3}L_{3}}\right) + \vs \Big\vert j_{i}=-\frac{N_{i}}{2}+1,\ldots,
  \frac{N_{i}}{2},\quad i=1,2,3 \right\},
  \label{eqn:Kgrid}
\end{equation}
where $\vs$ is a shift vector, and we assume $N_{i}$ is an even
number.  Two common choices are $\vs=(0,0,0)$ (no shift) and
$\vs=\left(
\frac{\pi}{N_{1}L_{1}},\frac{\pi}{N_{2}L_{2}},\frac{\pi}{N_{3}L_{3}}
\right)$ (half grid shift). It should be noted that the inclusion of a
non-zero shift vector could violate the BvK boundary condition. But this
only adds an optional post-processing procedure for handling the phase
vector and will be discussed in section~\ref{subsec:postproc}.  For
now on we assume $\vs=(0,0,0)$, and the BvK boundary condition holds
because
\[
\psi_{b,\vk}(\vr+N_{i}L_{i}\ve_{i})=e^{\I \vk\cdot (\vr+N_{i}L_{i}\ve_{i})}
u_{b,\vk}(\vr+N_{i}L_{i}\ve_{i}) = e^{\I \vk \cdot N_{i}L_{i}\ve_{i}}
\psi_{b,\vk}(\vr) = \psi_{b,\vk}(\vr).
\]
The last equality holds because
\[
e^{\I \vk \cdot N_{i}L_{i}\ve_{i}} = 1,
\]
which is satisfied for $\vk\in \KS$.

\subsection{Discrete formulation}

For each $\vk\in \KS$, Eq.~\eqref{eqn:KSBloch} is solved numerically
for $b = 1,\ldots,n_b$, and we assume the resulting eigenfunctions are
solved for and represented on a uniform grid discretizing the unit
cell $\Omega^{u}$
\begin{equation}
  \RS = \left\{ \left( \frac{j_{1}L_{1}}{M_{1}}, \frac{j_{2}L_{2}}{M_{2}},
  \frac{j_{3}L_{3}}{M_{3}}\right) \Big\vert j_{i}=0,\ldots,
  M_{i}-1,\quad i=1,2,3 \right\}.
  \label{eqn:Rgrid}
\end{equation}
Then each eigenfunction $u_{b,\vk}(\vr)$ \Rtwo{is represented as a column vector}. With some abuse of notation, this column vector is still denoted
by $u_{b,\vk}\in \CC^{\MS\times 1}$, and $u_{b,\vk}(\vj)\equiv
u_{b,\vk}(\vr_{\vj}), \vr_{\vj}\in \RS$.

Since the effectiveness of the technique presented in this paper is heavily
based on numerical linear algebra procedures such as QRCP factorizations and
discrete Fourier transforms, it turns out that using discrete variable
indices such as $\vj$ is more convenient than the continuous variable
indices such as $\vr$.  Therefore we will use discrete indices when
possible for the rest of the paper.  Furthermore, we let
\[
\Omega^{u}
= \left\{(j_1,j_2,j_3)\vert j_{i}=0,\ldots, M_{i}-1 ,\quad i=1,2,3
\right\}
\]
denote the set of indices corresponding to real space grid points
$\RS$ in the unit cell. The periodic boundary
condition on $\Omega^{u}$ allows us to interpret $\vj$ and
$\vj+M_{i}\ve_{i}$ as equivalent points $(i=1,2,3)$.  The periodic
eigenvector $u_{b,\vk}$ satisfies the discrete orthonormal condition
\begin{equation}
  \sum_{\vj\in \Omega^{u}}u^{*}_{b,\vk}(\vj) u_{b',\vk'}(\vj) = \delta_{b,b'}
  \delta_{\vk,\vk'}.
  \label{eqn:Uortho}
\end{equation}
Again, we denote by
\begin{equation}
\Omega^{g}=\left\{(j_1,j_2,j_3)\vert j_{i}=0,\ldots, N_{i}M_{i}-1 ,\quad
i=1,2,3 \right\}
  \label{eqn:Omegag}
\end{equation}
the corresponding set of indices of real space grid points in the
global supercell.  Similar to before, the periodic boundary condition on
$\Omega^{g}$ allows us to interpret $\vj$ and $\vj+N_{i}M_{i}\ve_{i}$
as equivalent points $(i=1,2,3)$.  The discretized eigenfunction
$\psi_{b,\vk}(\vj)$ is periodic on the global supercell $\Omega^{g}$,
and satisfies the discrete orthonormal condition
\begin{equation}
  \sum_{\vj\in \Omega^{g}}\psi^{*}_{b,\vk}(\vj) \psi_{b',\vk'}(\vj) = \delta_{b,b'}
  \delta_{\vk,\vk'} \NS.
  \label{eqn:Uortho_psi}
\end{equation}
Note that the convention taken in Eq.~\eqref{eqn:Omegag} places the
origin of the unit cell $\Omega^{u}$ at the origin of the global
supercell $\Omega^{g}$ as well. This is allowed due to the periodicity
of the global supercell.

\Rtwo{We now introduce a key concept for the development of our algorithm, the discrete density matrix.} Notationally, it is denoted by $P_{\vk}\in
\CC^{(\MS\NS)\times (\MS\NS)}$ and for each $\vk$-point is defined as
\begin{equation}
  P_{\vk}(\vj,\vj') = \frac{1}{\NS} \sum_{b=1}^{n_{b}}
  \psi_{b,\vk}(\vj)\psi^*_{b,\vk}(\vj'), \quad \vj,\vj'\in
  \Omega^{g}.
  \label{eqn:Pk}
\end{equation}
Here $*$ stands for the Hermitian conjugate operation.  
The complete discrete density matrix is then defined as
\begin{equation}
  P(\vj,\vj') = \sum_{\vk\in \KS} P_{\vk}(\vj,\vj'), \quad \vj,\vj'\in
  \Omega^{g}.
  \label{eqn:Ptot}
\end{equation}
It is easy to verify that $P_{\vk}$ and $P$ satisfies the normalization
conditions
\[
\Tr P_{\vk} = n_{b} \quad \text{and} \quad \Tr P = \NS n_{b}.
\]

The following block circulant property of the density matrix plays an
important role in the SCDM-k method for constructing localized
orbitals.
\begin{prop}
  $P$ satisfies the block circulant property, i.e.
  \[
  P(\vj+M_{i}\ve_{i},\vj'+M_{i}\ve_{i}) = P(\vj,\vj'),\quad \forall
  \vj,\vj'\in \Omega^{g}, \quad i=1,2,3.
  \]
  \label{prop:Ptrans}
\end{prop}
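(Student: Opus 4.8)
The plan is to prove the slightly stronger fact that \emph{each} $P_{\vk}$ is invariant under simultaneous unit cell shifts of its two arguments; the claim for $P$ then follows at once by summing over $\vk\in\KS$ in the definition~\eqref{eqn:Ptot}. The starting observation is purely geometric: since the unit cell is discretized by $M_i$ grid points along direction $i$ and the global supercell grid in~\eqref{eqn:Omegag} has spacing $L_i/M_i$, incrementing the integer index by $M_i\ve_i$ corresponds to a real space translation by exactly one unit cell length $L_i\ve_i$. Hence the whole argument reduces to tracking how the discretized orbital $\psi_{b,\vk}$ behaves under a one cell translation.

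For this I would use the Bloch decomposition~\eqref{eqn:Bloch}, $\psi_{b,\vk}(\vr)=e^{\I\vk\cdot\vr}u_{b,\vk}(\vr)$, together with the unit cell periodicity $u_{b,\vk}(\vr+L_i\ve_i)=u_{b,\vk}(\vr)$. Writing the Bloch form at the translated point, factoring the exponential, and using periodicity of the $u$ part gives
\[
\psi_{b,\vk}(\vj+M_i\ve_i)=e^{\I\vk\cdot L_i\ve_i}\,\psi_{b,\vk}(\vj)=e^{\I k_i L_i}\,\psi_{b,\vk}(\vj).
\]
The decisive point is then that this phase cancels against its conjugate in the rank one products that build $P_{\vk}$ in~\eqref{eqn:Pk}: for each $b$,
\[
\psi_{b,\vk}(\vj+M_i\ve_i)\,\psi^{*}_{b,\vk}(\vj'+M_i\ve_i)=e^{\I k_i L_i}e^{-\I k_i L_i}\,\psi_{b,\vk}(\vj)\,\psi^{*}_{b,\vk}(\vj')=\psi_{b,\vk}(\vj)\,\psi^{*}_{b,\vk}(\vj').
\]
Summing over $b$ yields $P_{\vk}(\vj+M_i\ve_i,\vj'+M_i\ve_i)=P_{\vk}(\vj,\vj')$, and summing over $\vk$ gives the stated identity for $P$.

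There is no genuine analytic difficulty here; the result is a direct consequence of the Bloch structure, and the main thing to be careful about is bookkeeping. Specifically, one must verify that a shift of $M_i$ in the \emph{integer} index is one unit cell translation (so that the unit cell periodicity of $u_{b,\vk}$ applies, rather than the larger global supercell period), and one must keep the complex conjugation straight so that the two phases cancel rather than combine into $e^{2\I k_i L_i}$. It is worth noting that the argument actually establishes the stronger statement that each $P_{\vk}$ is block circulant with respect to unit cell translations, which is precisely what makes the subsequent per-$\vk$ algorithmic reductions possible.
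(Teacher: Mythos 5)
Your proposal is correct and follows essentially the same route as the paper's proof: reduce to showing each $P_{\vk}$ is block circulant, use the Bloch decomposition together with the unit cell periodicity of $u_{b,\vk}$ to see that a shift by $M_i\ve_i$ produces only the phase $e^{\I k_i L_i}$ on $\psi_{b,\vk}$, and observe that this phase cancels against its conjugate in each rank-one term before summing over $b$ and $\vk$. The only cosmetic difference is that you isolate the translation rule for $\psi_{b,\vk}$ as an intermediate step, whereas the paper substitutes the Bloch form directly into the product; the content is identical.
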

\begin{proof}
  It is sufficient to show that each $P_{\vk}$ satisfies the block
  circulant property.
  For any $i=1,2,3$,
  \[
  \begin{split}
  &P_{\vk}(\vj+M_{i}\ve_{i},\vj'+M_{i}\ve_{i}) 
  = \frac{1}{\NS} \sum_{b=1}^{n_{b}}
  \psi_{b,\vk}(\vj+M_{i}\ve_{i})\psi^*_{b,\vk}(\vj'+M_{i}\ve_{i})\\
  = & \frac{1}{\NS} \sum_{b=1}^{n_{b}}
  e^{\I \vk\cdot (\vr_{\vj}+L_{i}\ve_{i})}u_{b,\vk}(\vj+M_{i}\ve_{i}) 
  e^{-\I \vk\cdot
  (\vr_{\vj'}+L_{i}\ve_{i})}u^{*}_{b,\vk}(\vj'+M_{i}\ve_{i}) \\
  = & 
  e^{ \I \vk\cdot \vr_{\vj}} u_{b,\vk}(\vj) 
  e^{-\I \vk\cdot \vr_{\vj'}}u^{*}_{b,\vk}(\vj')  = P_{\vk}(\vj,\vj').
  \end{split}
  \]
  Therefore, each $P_{\vk}$ is block circulant. Here we have
  implicitly used the aforementioned periodic structure over
  $\Omega^g$ to address when $\vj+M_{i}\ve_{i}$ or $\vj'+M_{i}\ve_{i}$
  yields a point outside the boundary of $\Omega^g$.
\end{proof}

\section{Selected columns of the density matrix}
\label{sec:scdm}

\Rtwo{Once the Kohn-Sham equations have been solved numerically, we have the means to construct a set of $\NS n_b$ eigenfunctions encoded as columns of $\Psi^g$ over the global supercell $\Omega^g$. However, the functions will be delocalized spatially. We now outline a construction for computing $\NS n_b$ localized eigenfunctions over the global supercell that span the same space as $\Psi^g$. Notably, the matrix of all $\NS n_b$ vectors over $\NS M$ spatial points may be prohibitively expensive to even store. As a consequence of this, we provide algorithms that construct $n_b$ of these localized functions associated with a single unit cell. The periodic structure of the problem means this is sufficient for our needs.}

\subsection{\Rtwo{Review of the SCDM procedure for $\Gamma$ point calculation}}
In order to present the SCDM-k method, we first briefly review the
procedure for localizing Kohn-Sham orbitals via the SCDM procedure for
$\Gamma$ point calculations, of which the details 
can be found in Ref.~\cite{SCDM}.  In order to remain notationally
consistent within this work, we use slightly different notation than
in \cite{SCDM}. 

We present the SCDM method as if Kohn-Sham orbitals are
only defined on a single unit cell $\Omega^{u}$ with the one $\vk$
point, the so-called the $\Gamma$ point, \ie~$\NS = 1$. Let
$\left\{\psi_\alpha\right\}_{\alpha=1}^{n_b}$ represent the $n_b$
Kohn-Sham orbitals discretized on a uniform grid, and collected as
columns of the matrix $\Psi$. We leverage the fact that the density matrix $P =\Psi\Psi^*$ has well localized columns for insulating systems \cite{BenziBoitoRazouk2013}, and use columns of $P$ as a starting point for constructing a localized basis. We are always interested in finding a representation of the entire Kohn-Sham invariant subspace and thus construct $n_b$ localized orbitals.

Algorithm~\ref{alg:scdmgamma_simple} presents the SCDM algorithm in its simplest form, providing a unitary transform from $\Psi$ to a set of orthogonal localized orbitals $\Phi$. Here we see that the algorithm computationally amounts to a single QRCP factorization. This factorization can be computed, $\eg\text{,}$ via the \texttt{qr} function in MATLAB \cite{MATLAB} or the DGEQP3 routine in LAPACK \cite{lapack}.

To motivate the algorithmic developments here, we also present a slight
variation on Algorithm~\ref{alg:scdmgamma_simple}. Specifically, we
assume that we do not have access to the matrix Q. Instead we simply
have a set of $n_b$ columns that the permutation matrix $\Pi$ chose to
move forward during the QRCP process. It turns out that this information
is sufficient to generate a localized basis. This variation is presented
in Algorithm~\ref{alg:scdmgamma}, where we first compute the set $\CS$
and then construct the relevant columns of $\Psi\Psi^*.$ These columns
are themselves well localized but they are not orthogonal, which is a
desirable property. Fortunately, because $$\Psi\Psi =
P_{:,\CS}\left(P_{\CS,\CS}\right)^{-1}P_{:,\CS}^*$$ we may orthogonalize
$P_{:,\CS}$ using the square root of $\left(P_{\CS,\CS}\right)^{-1}$.
The rapid decay away from the diagonal of $P_{\CS,\CS}$ ensures that the
resulting orthogonalized columns remain well localized.

\begin{algorithm}  
\begin{small}
\begin{center}
  \begin{minipage}{5in}
\begin{tabular}{p{0.5in}p{4.5in}}
{\bf Input}:  &  \begin{minipage}[t]{4.0in}
  The orthonormal Kohn-Sham orbitals $\Psi.$
\end{minipage} \\
{\bf Output}:  &  \begin{minipage}[t]{4.0in}
  The orthogonalized SCDM $\Phi.$
\end{minipage} 
\end{tabular}
\begin{algorithmic}[1]
  \STATE Compute the QRCP factorization $\Psi^*\Pi = QR$
  \RETURN $\Phi = \Psi Q$
\end{algorithmic}
\end{minipage}
\end{center}
\end{small}
\caption{\Rtwo{Simple algorithm for SCDM with $\Gamma$-point calculation.}}
\label{alg:scdmgamma_simple}
\end{algorithm}

\begin{algorithm}  
\begin{small}
\begin{center}
  \begin{minipage}{5in}
\begin{tabular}{p{0.5in}p{4.5in}}
{\bf Input}:  &  \begin{minipage}[t]{4.0in}
  The Kohn-Sham orbitals $\Psi.$
\end{minipage} \\
{\bf Output}:  &  \begin{minipage}[t]{4.0in}
  The SCDM $\tilde{\Phi}$ or the orthogonalized SCDM $\Phi.$
\end{minipage} 
\end{tabular}
\begin{algorithmic}[1]
	\STATE Compute the QRCP factorization $\Psi^*\Pi = QR$
  \STATE Let $\CS$ denote the original indices of the first $n_b$ columns selected by  $\Pi$.
  \STATE Compute the SCDM $\tilde{\Phi} = P_{:,\CS} = \Psi
  (\Psi_{\CS,:})^{*} $, which are localized orbitals.
  \IF{the orthogonalized SCDM are desired}
  \STATE Compute $\left(P_{\CS,\CS}\right)^{-1/2}$.
  \STATE Compute the orthonormal orbitals as $\Phi = \tilde{\Phi}\left(P_{\CS,\CS}\right)^{-1/2}$.
  \RETURN $\Phi$
  \ELSE
  \RETURN $\tilde{\Phi}$
  \ENDIF
\end{algorithmic}
\end{minipage}
\end{center}
\end{small}
\caption{\Rtwo{Algorithm for SCDM with $\Gamma$-point calculation.}}
\label{alg:scdmgamma}
\end{algorithm}

\subsection{SCDM with Brillouin zone sampling}

\Rtwo{When we had a single $\vk$-point we sought to compute a set of
$n_b$ localized functions in the cell associated with that $\vk$-point.
Now, we may view our spatial domain as consisting of $N$ unit cells,
consequently we seek to compute $\NS n_b$ localized functions over the
entire spatial domain $\Omega^g$. The most straightforward way to
accomplish this would be to simply treat a larger problem, one with $\NS
M$ spatial unknowns and $\NS n_b$ eigenfunctions denoted by $\Psi^{g}$,
as input to Algorithm~\ref{alg:scdmgamma_simple} or~\ref{alg:scdmgamma}.
In this case, the global density matrix
$\Psi^{g}\left(\Psi^{g}\right)^*$ would exhibit the locality we desire.
However, this could be prohibitively expensive since the QRCP computation would scale as $\Or(\NS^3)$. However, conceptually such a procedure can serve as a point of comparison for the performance of our new algorithm.}

\Rtwo{To overcome this obstacle, we appeal to the block circulant
property in Proposition~\ref{prop:Ptrans}. Applying our algorithm to
$\Psi^{g}$ directly we would expect that the index set $\CS$, of size
$\NS n_b$, would contain $n_b$ columns associated with each unit cell
$\Omega^{u}$. Therefore, we will solve a smaller problem to extract
$n_b$ columns corresponding to  a single unit cell. We then construct a set denoted $\CS^g$ that may be used for the entire global supercell by simply adding the translates of these columns into the other unit cells, with $\NS$ unique translates of the $n_b$ columns this yields a set of the desired size. Once we have this set of columns of the global density matrix, we leverage the block circulant structure of $P$ to efficiently perform the orthogonalization in a manner analogous to lines 6 and 7 of Algorithm~\ref{alg:scdmunit}.}

The set $\CS^g$ constructed by the above procedure does not necessarily coincide with the columns of $P$ we would select if we used our existing algorithm directly on the global problem. However, we make the assumption that while the detailed shape
of the columns of the density matrix requires a relatively large
number of $\vk$-points to resolve, the actually selection of columns in a given unit cell is relatively insensitive to the number of $\vk$-points used. Since
$u_{b,\vk}$ is discretized on a fine real space grid, even if the columns selected 
shift by a few grid points the resulting columns of the global
density matrix should still be well localized.  Numerical experiments in
section~\ref{sec:numer} corroborate this intuitive argument.

We now explicitly introduce the small local supercell $\Omega^{\ell}$ where
$\Omega^{u}\subset \Omega^{\ell}\subset \Omega^{g}$
(Figure~\ref{fig:kpoint}) used in to pick the $n_b$ columns of $\CS$ in a single unit cell. With a similar abuse of notation as before, $\Omega^{\ell}$ is also used to denote the indices of
grid points in the local supercell. The number of unit cells in
$\Omega^{\ell}$ along the $i$-th direction is denoted by $N_{i}^{\ell}$.
Following the same convention as in the definition of the global
supercell in Eq.~\eqref{eqn:Omegag}, the grid points in $\Omega^{\ell}$ are
\begin{equation}
  \Omega^{\ell}=\left\{(j_1,j_2,j_3)\vert j_{i}=0,\ldots,
  N^{\ell}_{i}M_{i}-1 ,\quad i=1,2,3 \right\},
  \label{eqn:Omegal}
\end{equation}
which places the
origin of the unit cell $\Omega^{\ell}$ also at the origin of the global
supercell $\Omega^{g}$.

\subsection{Computing columns of the density matrix}
We now discuss the construction of selected columns of the density matrix that are localized. Let $q_{i}=N_{i}/N_{i}^{\ell}$. If $q_{i}$ is an integer for
$i=1,2,3$ then the Kohn-Sham equations on the local supercell
$\Omega^{\ell}$ can be 
simply solved through a coarse sampling of the Brillouin zone. 
The resulting collection of grid points in the Brillouin zone is denoted by
\begin{equation}
  \KS^{\ell} = \left\{ \left( \frac{2\pi j_{1}}{N_{1}^{\ell}L_{1}}, \frac{2\pi
  j_{2}}{N_{2}^{\ell}L_{2}},
  \frac{2\pi j_{3}}{N_{3}^{\ell}L_{3}}\right) \Big\vert
  j_{i}=-\frac{N_{i}^{\ell}}{2}+1,\ldots,
  \frac{N_{i}^{\ell}}{2},\quad i=1,2,3 \right\}.
  \label{eqn:Kgridlocal}
\end{equation}
Since the local supercell is only used as a numerical tool \Rtwo{to select columns} more efficiently, we make an additional
approximation by discarding the shift vector $\vs$ when reconstructing
the Kohn-Sham orbital $\psi^{\ell}_{b,\vk}$ from its periodic part
$u_{b,\vk}$, and all $\psi^{\ell}_{b,\vk}$'s satisfying the BvK boundary condition in
$\Omega^{\ell}$. 

\Rtwo{After solving the Kohn-Sham equations on the local supercell via coarse sampling of the Brillouin zone, we get $\NS^\ell n_b$ orthonormal wavefunctions and arrange them into the columns of the $\NS^\ell M \times \NS^\ell n_b$ matrix $\Psi^{\ell}$.}  We now apply the SCDM procedure for
$\Gamma$-point calculations as outlined in Algorithm~\ref{alg:scdmgamma} to
$\Psi^{\ell}$, which selects $n_b \NS^{\ell}$ \Rtwo{columns denoted $\CS^\ell$}. We then restrict
this larger set of selected columns to the set $\CS^u$ which contains the
$n_{b}$ columns associated with points inside a single unit cell
$\Omega^{u}$. 

Given $\CS^u$, we may compute the respective selected columns of the density matrix on the entire
global supercell as
$P(\vj,\vc),\vj\in \Omega^{g},\vc\in \CS^u$. We first construct
$P_{\vk}(\vj,\vc),\vj\in \Omega^{u},\vc\in \CS^u$ as
\begin{equation}
   P_{\vk}(\vj,\vc) = \frac{1}{\NS} \sum_{b=1}^{n_{b}}
   e^{\I \vk\cdot(\vr_{j}-\vr_{c})}u_{b,\vk}(\vj)u^*_{b,\vk}(\vc).
   \label{eqn:PkOmegau}
\end{equation}
Then for $\vj\in \Omega^{g}\backslash \Omega^{u}$, note that for any
$n_{i}=0,\ldots,N_{i}-1, i=1,2,3$, we have
\begin{equation}
  \begin{split}
    P_{\vk}(\vj+n_{i}M_{i}\ve_{i},\vc) = &
    e^{\I \vk\cdot (n_{i} L_{i}\ve_{i})} \left( \frac{1}{\NS}
    \sum_{b=1}^{n_{b}} e^{\I \vk\cdot (\vr_{j}-\vr_{c})}
    u_{b,\vk}(\vj + n_{i}M_{i}\ve_{i}) 
    u_{b,\vk}^*(\vc)
    \right)\\
    = & e^{\I \vk\cdot (n_{i} L_{i}\ve_{i})} P_{\vk}(\vj,\vc).
  \end{split}
  \label{eqn:PkOmegag}
\end{equation}
Therefore, $P_{\vk}$ can be constructed just by multiplying
$P_{\vk}(\vj,\vc)$, $\vj\in \Omega^{u}$ by phase factors.  Summing up
$P_{\vk}(\vj,\vc)$'s for all $\vk$ we obtain $P(\vj,\vc)$. For
convenience we let $P_{\CS}\in \CC^{(\MS\NS)\times n_{b}}$ denote the
matrix elements
\[
P_{:,\CS^u} \left(\vj,b\right) = P(\vj,\vc_{b}), \quad \vj\in
\Omega^{g}, \vc_{b}\in \CS^u.
\]
The preceding discussion is summarized in Algorithm \ref{alg:scdmunit}
and yields the desired selected columns of the density matrix over the
global supercell. \Rtwo{Note that this corresponds to computing only $n_b$ of the $\NS n_b$ total localized functions we expect. If desired, the others may be constructed by translating $\CS^u$ into a different unit cell, see the following section for details.}

\begin{algorithm}  
\begin{small}
\begin{center}
  \begin{minipage}{5in}
\begin{tabular}{p{0.5in}p{4.5in}}
{\bf Input}:  &  \begin{minipage}[t]{4.0in}
  Monkhorst-Pack points in the Brillouin zone $\KS$;\\
  Periodic parts of Kohn-Sham orbitals $\{u_{n\vk}\}$ for $n =
  1,\ldots,n_b$ and $\vk\in \KS$;\\
  Sub-sampling Monkhorst-Pack points in the reciprocal space 
  $\KS^{\ell}$;\\
\end{minipage} \\
{\bf Output}:  &  \begin{minipage}[t]{4.0in}
  Non-orthogonal SCDM associated with the unit cell $\Omega^u$.
\end{minipage} 
\end{tabular}
\begin{algorithmic}[1]
	\STATE Construct $\Psi^\ell$ from $\{u_{n\vk}\}$ with $\vk \in
  \KS^{\ell}$ from the Bloch decomposition~\eqref{eqn:Bloch}.
  \STATE Compute the QRCP factorization $\left(\Psi^\ell\right)^*\Pi = QR$
  \STATE Let $\CS^\ell$ denote the original indices of the first $\NS^\ell n_b$ columns selected by  $\Pi$.
  \STATE Find the selected column indices $\CS^u = \left\{j \in \CS^\ell
  \vert j \in \Omega^u\right\}$ in the unit cell.
  \FORALL{$\vk$}
  \STATE Construct $P_{\vk}(\vj,\vc)$ for $\vj \in \Omega^u$ and $\vc \in \CS^u$ via \eqref{eqn:PkOmegau}.
  \ENDFOR
  \STATE Compute $P(\vj,\vc)$ for $\vj \in \Omega^g$ and $\vc\in\CS^u$ using \eqref{eqn:PkOmegag}.
\end{algorithmic}
\end{minipage}
\end{center}
\end{small}
\caption{\Rtwo{Computing the (non-orthogonal) selected columns of the density
matrix inside a unit cell.}}
\label{alg:scdmunit}
\end{algorithm}

\subsection{Construct the orthonormalized SCDM}

\Rtwo{We now have a procedure to compute $\CS^u$ and} 
due to the block circulant property of the density matrix, this is
sufficient.  All the remaining columns are the
block translates of these columns into the other unit cells in
$\Omega^{g}$.  Define the collection of indices
\[
\CS^{g} \equiv \left\{ \vc + (n_{1}M_{1},n_{2}M_{2},n_{3}M_{3}) \vert
\vc\in \CS,\quad n_{i}=0,\ldots,N_{i}-1,\quad i=1,2,3 \right\},
\]
which induces a matrix $P_{:,\CS^{g}}\in \CC^{(\MS\NS)\times
  (n_{b}\NS)}$ such that
\[
P_{:,\CS^{g}}\left(\vj,b\right) = P(\vj,\vc_{b}), \quad \vj\in
\Omega^{g}, \vc_{b}\in \CS^{g}.
\]
\Rtwo{These columns of $P$ are precisely the $\NS n_b$ columns that form
our localized basis over the entire global supercell. However, they are
not orthonormal and we now describe an efficient procedure to orthonormalize them.}

The matrix $P_{:,\CS^{g}}$ is block circulant when viewed as an $\NS\times \NS$ block
matrix with each block of size $\MS\times n_{b}$.
Note that the storage cost of $P_{\CS^{g}}$ is $\Or(n_{b}\MS\NS^2)$,
and it is therefore never explicitly computed or stored. We also
define the matrix $P_{\CS^{g},\CS^{g}}\in \CC^{(n_{b}\NS)\times
  (n_{b}\NS)}$ as
\[
P_{\CS^{g},\CS^{g}} \left( a,b \right) = P \left( \vc_{a},\vc_{b} \right), 
\quad \vc_{a}, \vc_{b}\in \CS^{g}
\]
and note that $P_{\CS^{g},\CS^{g}}$ is also block circulant when viewed as an
$\NS\times \NS$ block matrix with each block of size $n_{b}\times
n_{b}$.

Now, if we consider the matrix $\Phi^g\in \CC^{(\MS\NS)\times
(n_{b}\NS)}$ defined as
\begin{equation}
  \Phi^g = P_{:,\CS^{g}} \left( P_{\CS^{g},\CS^{g}} \right)^{-\frac12},
  \label{eqn:orthopsig}
\end{equation}
it is also block circulant and satisfies the discrete orthonormality condition
\[
\left(\Phi^g\right)^{*}\Phi^g = I.
\]
Therefore, $\Phi^g$ represents an orthonormal set of localized basis functions
across all of the $\vk$-points. Due to the translational invariance, we
only need to compute the columns of $\Phi^g$ centered in $\Omega^{u}$,
and as before, the remaining columns are just the block translates of these columns
into the other unit cells in $\Omega^{g}$. 
Similar to $P_{:,\CS^{g}}$, the entire matrix $\Phi^g$ is
neither explicitly constructed nor stored in practice.

In fact, all of the rows of $P_{:,\CS^g}$ and $\Phi^g$ may be
constructed from knowledge of the first $\MS$ rows, \ie~those
associated with a single unit cell. More specifically, if
$\vj \in \Omega^u$ and $\vj' \in \Omega^g$ are such that
\[
\vj' = \vj + (n_1M_1,n_2M_2,n_3M_3)
\]
for some $(n_1,n_2,n_3)$, then for any $\vc_{b}\in \CS^g$
\begin{equation}
\label{eqn:circ}
\Phi^g(\vj',\vc_{b}) = \Phi^g\left(\vj,\vc_{b}-(n_1M_1,n_2M_2,n_3M_3)\right).
\end{equation}
To make this
explicit notationally, let $P_{\Omega^u,\CS^g}\in \CC^{(\MS)\times
  (n_{b}\NS)}$ be such that
\[
P_{\Omega^u,\CS^{g}} \left( \vj,b \right) = P(\vj,\vc_{b}), \quad \vj\in
\Omega^{u}, \vc_{b}\in \CS^{g}.
\]
We define $\Phi^u$ as
\[
\Phi^u = P_{\Omega^u,\CS^{g}} \left( P_{\CS^{g},\CS^{g}} \right)^{-\frac12},
\]
and since we may construct $\Phi^g$ from $\Phi^u$ via
Eq.~\eqref{eqn:circ} we now focus on the construction of
$\Phi^u$.

Direct computation of the matrix square root of
$P_{\CS^{g},\CS^{g}}\in \CC^{(n_{b}\NS)\times(n_{b}\NS)}$ in
Eq.~\eqref{eqn:orthopsig} costs $\Or(\NS^{3})$ and is hence
computationally expensive.  Instead, we demonstrate an algorithm to
take advantage of the block circulant property of
$P_{\CS^{g},\CS^{g}}$ that reduces
the complexity to $\Or(\NS \log \NS)$.  Let $\FS$ be the matrix representing the 
three-dimensional discrete Fourier transform that acts with respect to the $\vn
= (n_1,n_2,n_3)$ index in the set $\CS^{g}$ and
$\FS^{-1} = \FS^*$ be the three-dimensional inverse discrete Fourier transform that
acts with respect to the Fourier space grid corresponding to
$\vn$. Computationally these operations are performed via the fast Fourier transform. Observe that,
\begin{equation}
  \begin{split}
    \Phi^u = &
    P_{\Omega^u,\CS^{g}} \FS^{-1} \FS
    \left(P_{\CS^{g},\CS^{g}}\right)^{-\frac12} \FS^{-1} \FS \\
    = & \left(P_{\Omega^u,\CS^{g}} \FS^{-1}\right) \left( \FS P_{\CS^{g},\CS^{g}}
    \FS^{-1} \right)^{-\frac12} \FS.
  \end{split}
  \label{eqn:psigcompute}
\end{equation}
\Rtwo{We can move the square root outside $F$ and $F^{-1}$ because $\FS \left(P_{\CS^{g},\CS^{g}}\right)^{-\frac12} \FS^{-1}$ and
$\left( \FS P_{\CS^{g},\CS^{g}} \FS^{-1} \right)^{-\frac12}$ have exactly the same eigenvalues and eigenvectors. This is a consequence of the fact that $F$ is unitary and $P_{\CS^{g},\CS^{g}}$ is Hermitian positive definite.}

Eq.~\eqref{eqn:psigcompute} compactly represents the procedure
for accelerating the computation of the orthogonalized SCDM and we now
elaborate on their construction. First, we observe that the matrix
\begin{equation}
\hat{P}_{\CS^{g},\CS^{g}} = \FS P_{\CS^{g},\CS^{g}}
\FS^{-1}
\label{eqn:hatp_diag}
\end{equation}
is block diagonal with $\NS$ blocks each of size $n_{b}\times n_{b}$.
This means that, $\left(\hat{P}_{\CS^{g},\CS^{g}}\right)^{-1/2}$ may
be computed by taking the inverse square root of each small block
independently. The block diagonal structure follows immediately from
the fact that $P_{\CS^{g},\CS^{g}}$ is block circulant and the use of the discrete Fourier transform. Furthermore,
the diagonal blocks may be computed by applying $\FS$ to the first
$n_b$ columns of $P_{\CS^{g},\CS^{g}}$,
which means those are the only columns we need to explicitly construct. Finally, since there are $\NS$ diagonal blocks in
total, we may associate each of the diagonal blocks with an index
$\vn$.  Therefore, we refer to a given diagonal block of
$\hat{P}_{\CS^{g},\CS^{g}}$ as $\hat{P}_{\CS^{g},\CS^{g};\vn}$.

We now let
\begin{equation}
\hat{P}_{\Omega^u,\CS^g} = P_{\Omega^u,\CS^{g}} \FS^{-1},
\label{eqn:hatp}
\end{equation} 
which may be constructed by applying $\FS$ to the columns of
$\left(P_{\Omega^u,\CS^{g}}\right)^*$. Finally, similar to before we let $\hat{P}_{\Omega^u,\CS^g;\vn}$ denote a $\MS \times n_b$
matrix containing $n_b$ of the columns of $\hat{P}_{\Omega^u,\CS^g}$ associated
with a given index $\vn$. We define  
\begin{equation}
\hat{\Phi}^u = \Phi^u \FS^{-1},
\label{eqn:hatphi}
\end{equation} 
and once again let $\hat{\Phi}^u_{\vn}$ denote a $\MS \times n_b$
matrix containing $n_b$ columns of $\hat{\Phi}^u$ associated with a
given index $\vn$.

The use of a single index allows us to compactly write the
construction of $\hat{\Phi}^u$ as
\begin{equation}
\hat{\Phi}^u_{\vn} = \hat{P}_{\Omega^u,\CS^g;\vn}\left(\hat{P}_{\CS^{g},\CS^{g};\vn}\right)^{-1/2}.
\end{equation}
We may then form the first $\MS$ rows associated with a single unit cell
of $\Phi^g$ by applying $\FS^{-1}$ to the columns of
$\hat{\Phi}^u_{\vn}$. Algorithm~\ref{alg:orthoscdm} summarizes the steps for constructing the
orthogonalized SCDM centered in a single unit cell.

\begin{algorithm}  
\begin{small}
\begin{center}
  \begin{minipage}{5in}
\begin{tabular}{p{0.5in}p{4.5in}}
{\bf Input}:  &  \begin{minipage}[t]{4.0in}
  Monkhorst-Pack points in the Brillouin zone $\{\vk\}$;\\
  Non-orthogonal selected columns of the density matrix
  $P_{\CS}$;\\
\end{minipage} \\
{\bf Output}:  &  \begin{minipage}[t]{4.0in}
  Orthogonal SCDM $\Phi^{g}$ associated with the unit cell
  $\Omega^u$.
\end{minipage} 
\end{tabular}
\begin{algorithmic}[1]
  \STATE Compute the first $n_b$ columns of $P_{\CS^{g},\CS^{g}}$, and 
  $P_{\Omega^u,\CS^{g}}$.
  \STATE Compute $\hat{P}_{\CS^{g},\CS^{g}}$ via
  Eq.~\eqref{eqn:hatp_diag}.
  \STATE Compute $\left(
  \hat{P}_{\CS^{g},\CS^{g};\vn}\right)^{-\frac12}$ for all $\vn$.
  \STATE Compute $\hat{P}_{\Omega^u,\CS^g} = \left(\FS
  \left(P_{\Omega^u,\CS^g}\right)^*\right)^*$.
  \STATE Compute $\hat{\Phi}^u_{\vn} =
  \hat{P}_{\CS^u;\vn}\left(\hat{P}_{\CS^{g},\CS^{g};\vn}\right)^{-1/2}$
  for all $\vn$.
  \STATE Compute $\Phi^u = \left(\FS^{-1}  \left(\Phi^u\right)^*\right)^*$.
  \STATE Construct $\Phi^g(:,\CS)$ from $\Phi^u$ via
  Eq.~\eqref{eqn:circ}.
\end{algorithmic}
\end{minipage}
\end{center}
\end{small}
\caption{Computing the orthonormalized SCDM from the non-orthogonal SCDM.}
\label{alg:orthoscdm}
\end{algorithm}

\subsection{Post-processing for the shift vector in the Monkhorst-Pack
grid}\label{subsec:postproc}

Here we address the issues that arise when using the Monkhorst-Pack
grid with half grid shift $\vs$.  In such case, each orbital
$\psi_{\alpha}$
does not satisfy the BvK boundary condition in the global supercell
$\Omega^{g}$.  Instead,
\begin{equation}
  \psi_{\alpha}(\vr+N_{i}L_{i}\ve_{i}) = -\psi_{\alpha}(\vr), \quad \forall \vr\in \Omega^{g},
  \quad i=1,2,3,
  \label{eqn:BvK2}
\end{equation}
and a phase factor $(-1)$ is gained.  To be accurate, this phase
factor needs to be taken into account in the SCDM.  Note that
\begin{equation}
  \widetilde{P}(\vj,\vc) = \sum_{\vk} e^{\I \vs \cdot
  (\vr_{\vj}-\vr_{\vc})} P_{\vk}(\vj,\vc) = e^{\I \vs \cdot
  (\vr_{\vj}-\vr_{\vc})} P(\vj,\vc).
  \label{eqn:postproc}
\end{equation}
Here $P_{\vk}$ and $P$ are the density matrices obtained by taking
$\vs=(0,0,0)$.  Therefore the post-processing only requires multiplying
each column of the SCDM $P(\vj,\vc)$ and the orthonormalized SCDM
$\Phi^g_{\vj,\vc}$ by a phase vector $e^{\I \vs \cdot
(\vr_{\vj}-\vr_{\vc})}$.  It is straightforward to verify that the
post-processing procedure~\eqref{eqn:postproc} maintains the
orthonormality of the orthonormalized SCDM.

\subsection{Complexity}

The computational complexity for selecting the SCDM using the QRCP factorization with a local supercell is $\Or((\NS^{\ell})^3\MS
n_{b}^2)$.  Then, the complexity for computing the non-orthonormal SCDM
$P_{\Omega^g,\CS^u}$ via matrix-matrix multiplication is $\Or(\MS\NS n_{b}^2)$.  Due to
the usage of Eq.~\eqref{eqn:PkOmegag} the cost for computing
${P_{\CS}(\vj,\vc),\vj\in \Omega^{g}}$ is only $\Or(\MS\NS n_{b})$. \Rtwo{Importantly,
$\MS$ and $n_{b}$ are assumed to be fixed with respect to
the increase of the number of $\vk$-points (i.e. the number of unit
cells contained in the global supercell). Furthermore, we explicitly set $\NS^\ell$ to be small and not grow with $\NS$. Therefore, the cost for obtaining the
non-orthogonal SCDM is $\Or(\NS)$.}

In order to generate the orthonormal SCDM, the cost for computing the
Fourier transform $\FS P_{\CS^{g},\CS^{g}} \FS^{-1}$ is $\Or(\NS \log
(\NS) n_{b}^2)$, and the cost for computing the matrix square root of
the block diagonal matrix $\FS P_{\CS^{g},\CS^{g}} \FS^{-1}$ is
$\Or(\NS n_{b}^3)$.  The cost for computing the first block row of
$\left(P_{\CS^{g}} \FS^{-1}\right)$ is $\Or(\NS \log (\NS) \MS n_{b})$
and the cost for multiplying with matrix square root is $\Or(\NS \MS
n_{b}^2)$.  Therefore, the total cost for generating
$\Phi^g(\vj,\vc),\vj\in \Omega^{g},\vc\in \Omega^{u}$ is
\[
\Or(\NS \log
(\NS) n_{b}^2+\NS n_{b}^3+\NS \log (\NS) \MS n_{b}+\NS \MS n_{b}^2)).
\]
Finally, the cost for the post-processing by multiplying a phase vector
is $\Or(\MS\NS n_{b})$.  If we take the leading term with respect to
$\NS,\MS$ and think of $n_{b}$ as a small constant, then the
complexity of the whole algorithm is
\[
\Or\Big(\NS \log (\NS) \MS n_{b} + \NS \MS n_{b}^2 + (\NS^{\ell})^3\MS
n_{b}^2\Big).
\]
Hence, the complexity for obtaining the orthonormalized SCDM with
respect to the number of $\vk$-points used for sampling the Brillouin zone is
only $\Or(\NS\log \NS)$.

\section{Numerical examples}\label{sec:numer}

To illustrate the performance of the SCDM-k algorithm, we consider the
localization of orbitals obtained from model potentials in two and three
dimensions. In three dimensions, the model
potential takes the form
\begin{equation}
  V(\vr) = \sum_{n_{1}=0}^{N_{1}-1} \sum_{n_{2}=0}^{N_{2}-1}\sum_{n_{3}=0}^{N_{3}-1}
  V_{u}\left(\vr-\sum_{i=1}^{3}n_{i}\ve_{i}\right).
  \label{eqn:modelpotential}
\end{equation}
Here $V_{u}(\vr)$ is taken to be a Gaussian potential centered at the
origin in the unit cell $\Omega^{u}$ modeling an atom, \ie~

\begin{equation}
V_{u}(\vr) = -4.0 e^{-\frac{\norm{\vr}^2}{2\sigma^2}}.
\end{equation}
\Rtwo{The shape of the model potential in two dimensions is similar,
\begin{equation}
  V(\vr) = \sum_{n_{1}=0}^{N_{1}-1} \sum_{n_{2}=0}^{N_{2}-1}
  V_{u}\left(\vr-\sum_{i=1}^{2}n_{i}\ve_{i}\right).
  \label{eqn:modelpotential2}
\end{equation}}
We generally set $\sigma = 1$ and will explicitly state when we use a
different value.

\subsection{Shapes of the SCDM}
We first present numerical examples illustrating the shapes of the
SCDM in two and three dimensions.  In the two dimensional case, we set
$L_{1}=L_{2}=6.0$, $M_1=M_2=32$, $n_{b}=3$, and use \Rtwo{$N_1=N_2=8$ $\vk$-points per
direction.} For this example we set $\sigma = 0.8.$ Setting $n_{b}=3$
means that we should have one orbital that behaves similar to an
$s$-orbital \Rtwo{(spherical)} and two orbitals that behave similar to a
$p$-orbital \Rtwo{(non-spherical)}. Figure \ref{fig:plot2d} shows the shape of SCDM and the
orthonormalized SCDM. We only plot the three SCDM for a single
$\vk$-point near the middle of the domain and observe the rapid decay
of both the SCDM and the orthogonalized SCDM away from their unit
cell.

In three dimensions, we use \Rtwo{$N_1=N_2=N_3=4$} $\vk$-points per direction, set
$M_1=M_2=M_3=20$, and let $n_b=4$. Similar to the 2D case, we expect
that there should be one orbital that behaves similar to an
$s$-orbital and three orbitals that behave similar to a
$p$-orbital. Figure \ref{fig:plot2d} shows the SCDM and the
orthogonalized SCDM. \Rtwo{Here we only plot isosurfaces for the four SCDM for a single
$\vk$-point near the middle of the domain. As in the two dimensional case, we see that the bulk of the orbital is well localized within a single unit cell. It is also even more apparent than in the 2d case that our localized functions match the expected $s$- and $p$-orbital structure.}

In both of the preceding cases, the well localized orbitals also imply
that the matrix $P_{\CS^{g},\CS^{g}}$ and hence the matrix $\FS
P_{\CS^{g},\CS^{g}}\FS^{-1}$ are well conditioned, and the computation
of the matrix square root does not cause any numerical problems. In
fact, the condition number of the small block matrices we have to take
the inverse square root of was less than five in the two dimensional
example and 15 for the three dimensional example.

\subsection{Locality of the SCDM}
Figure~\ref{fig:scdm2d} and~\ref{fig:scdm3d} show that the SCDM are
qualitatively very well localized.  We now quantify this by measuring
the locality of the functions systematically. Specifically, we
construct the SCDM and the orthogonalized SCDM and then, over the
global supercell $\Omega^g$, measure the fraction of entries (denoted by $nz\%$)
where the relative magnitude of the functions is above a given threshold
$\epsilon.$

\begin{figure}[ht!]

  \centering \subfloat[]{\includegraphics[width=\textwidth]{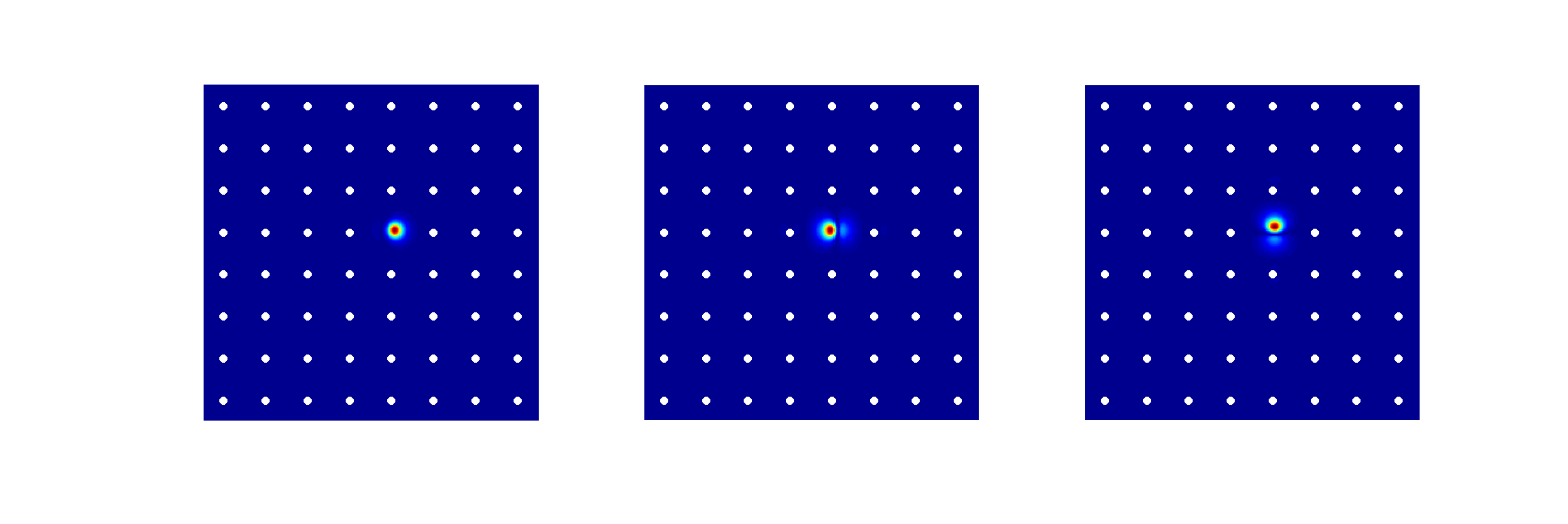}}

\subfloat[]{\includegraphics[width=\textwidth]{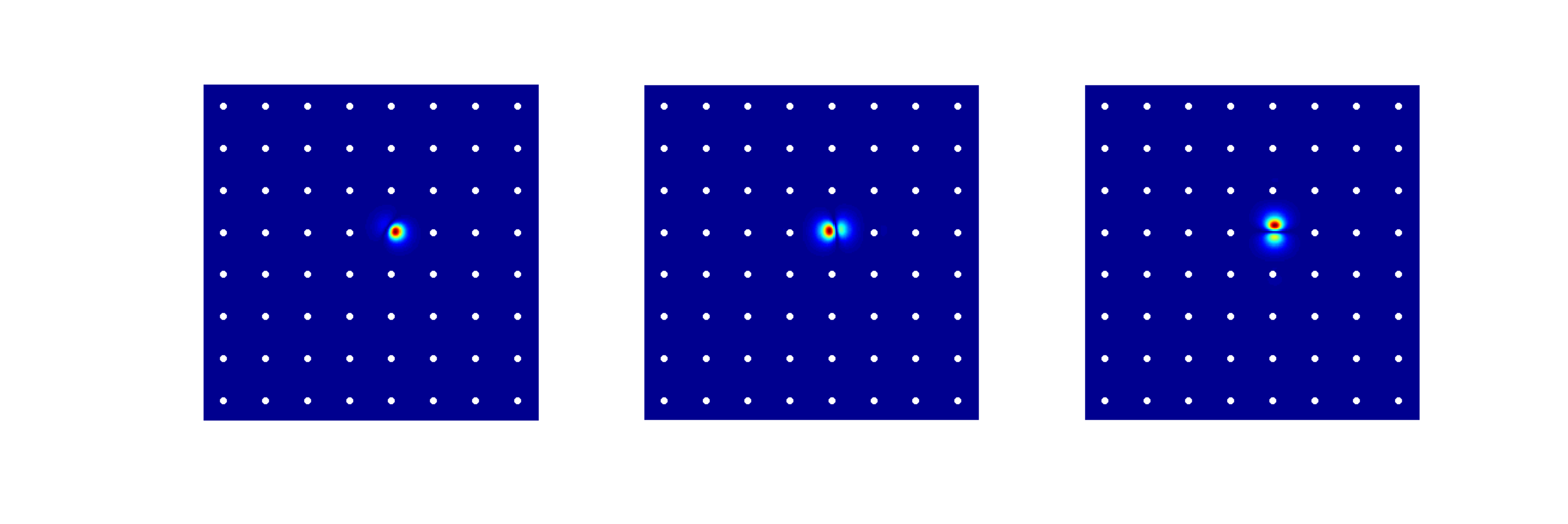}}

\subfloat[]{\includegraphics[width=\textwidth]{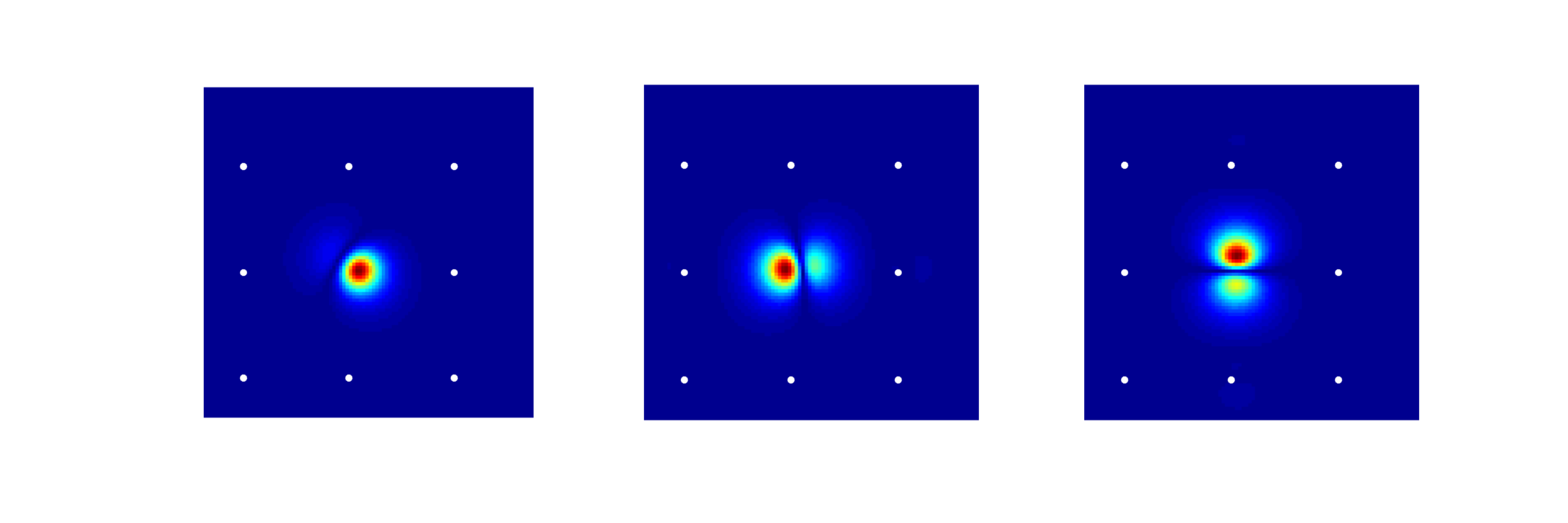}}

\caption{\label{fig:plot2d} Absolute value of the three SCDM (a) and
orthogonalized SCDM (b) located in a single unit cell and plotted over
the global supercell. \Rtwo{Zoomed-in images of the significant regions
of orthogonalized SCDM (c) showing the expected spherical ($s$-orbital
like) and non-spherical ($p$-orbitals like) structure.}}
\label{fig:scdm2d}
\end{figure}

\begin{figure}[ht!]
\centering
\subfloat[]{\includegraphics[width=\textwidth]{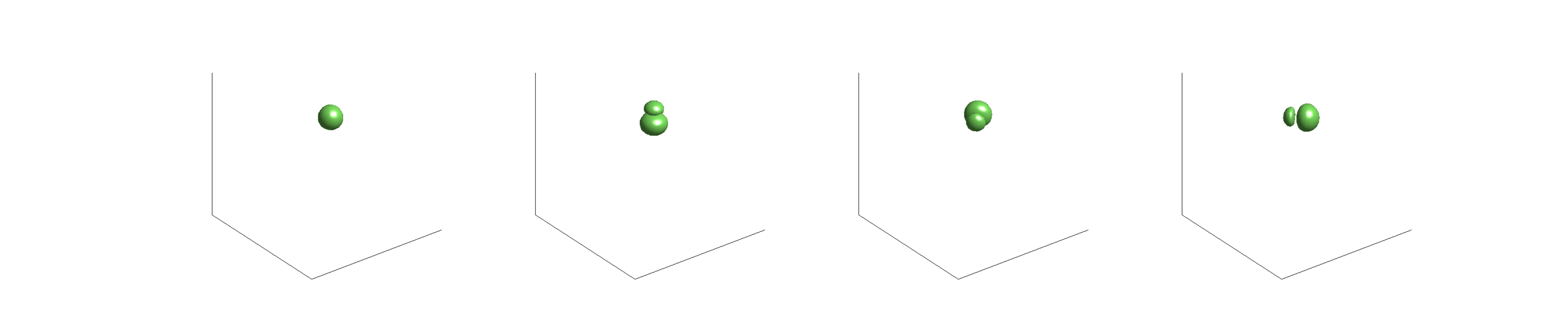}}

\subfloat[]{\includegraphics[width=\textwidth]{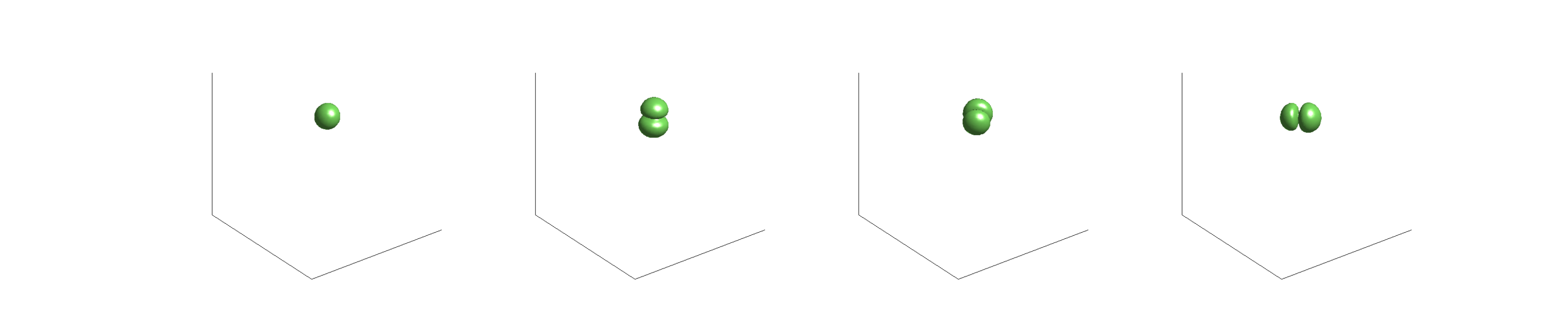}}
\caption{\label{fig:plot3d} Isosurface at a relative value of $0.1$ of
  the absolute value of the SCDM (a) and orthogonalized SCDM (b)
  located in a single unit cell and plotted over the global
  supercell.}
\label{fig:scdm3d}
\end{figure}

First, we measure the locality for the two dimensional problem. We use
\Rtwo{$N_1 = N_2 = 16$} $\vk$-points in each direction, $n_b=3$, and $M_1 = M_2 = 40.$
Figure \ref{fig:2dnz} shows the average locality for both the SCDM and
the orthogonalized SCDM. The diameter of the localized region is
proportional to the square root of the volume, and thus is
proportional to $(nz\%)^{1/2}$, consequently this is the quantity
we choose to plot. Importantly, here we see that the orthogonalization does not severely
impact the localization properties of the SCDM. Furthermore, when the
relative truncation threshold is set to $\epsilon = 10^{-2}$ for the
orthogonalized SCDM less than $1\%$ of the entries remain
non-zero. While in some cases here the orthogonalized orbitals are
actually more localized, we do not necessarily expect such behavior in
general and only expect that the orthogonalization step will not
significantly reduce the locality.

\begin{figure}[ht!]
\centering
\includegraphics[width=.75\textwidth]{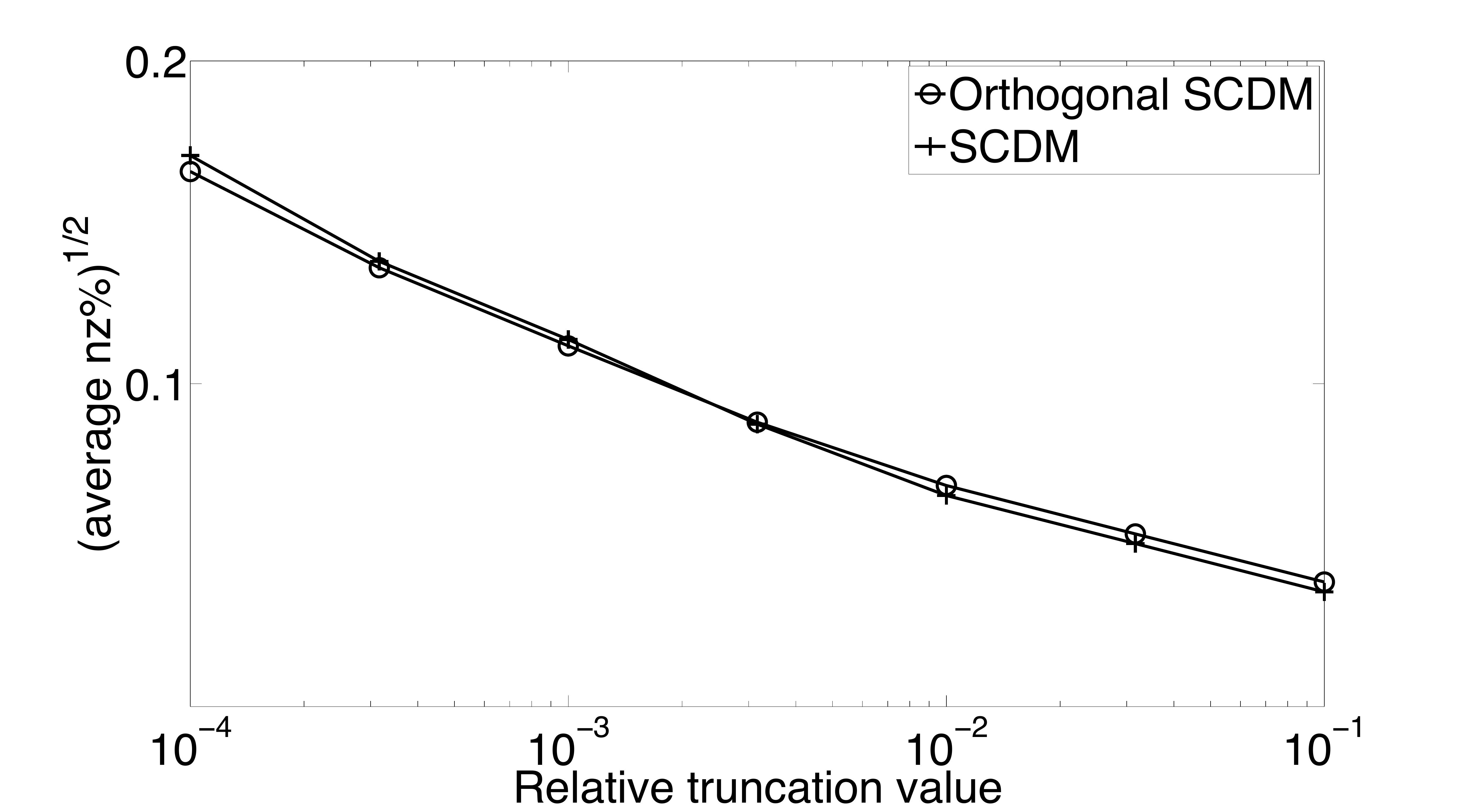}
\caption{\label{fig:2dnz} Average fraction of nonzero entries after
  truncation for the SCDM and the orthogonalized SCDM in two
  dimensions.}
\end{figure}

\Rtwo{We now move back to the three dimensional problem and use
$N_1 = N_2 = N_3 = 8$ $\vk$-points in each direction, $n_b = 4$, and $M_1 = M_2 = M_3 = 20.$
In Figure \ref{fig:3dnz} we plot average locality for both the SCDM and
the orthogonalized SCDM. Analogously to before, the diameter of the localized region is
proportional to the cube root of the volume, and thus we choose to plot $(nz\%)^{1/3}$.
Once again, the orthogonalization does not severely impact the
localization properties of the SCDM and when the relative
truncation threshold is set to $10^{-2}$ only
about $0.7\%$ of the entries or the orthogonalized SCDM remain non-zero.}

\begin{figure}[ht!]
\centering
\includegraphics[width=.75\textwidth]{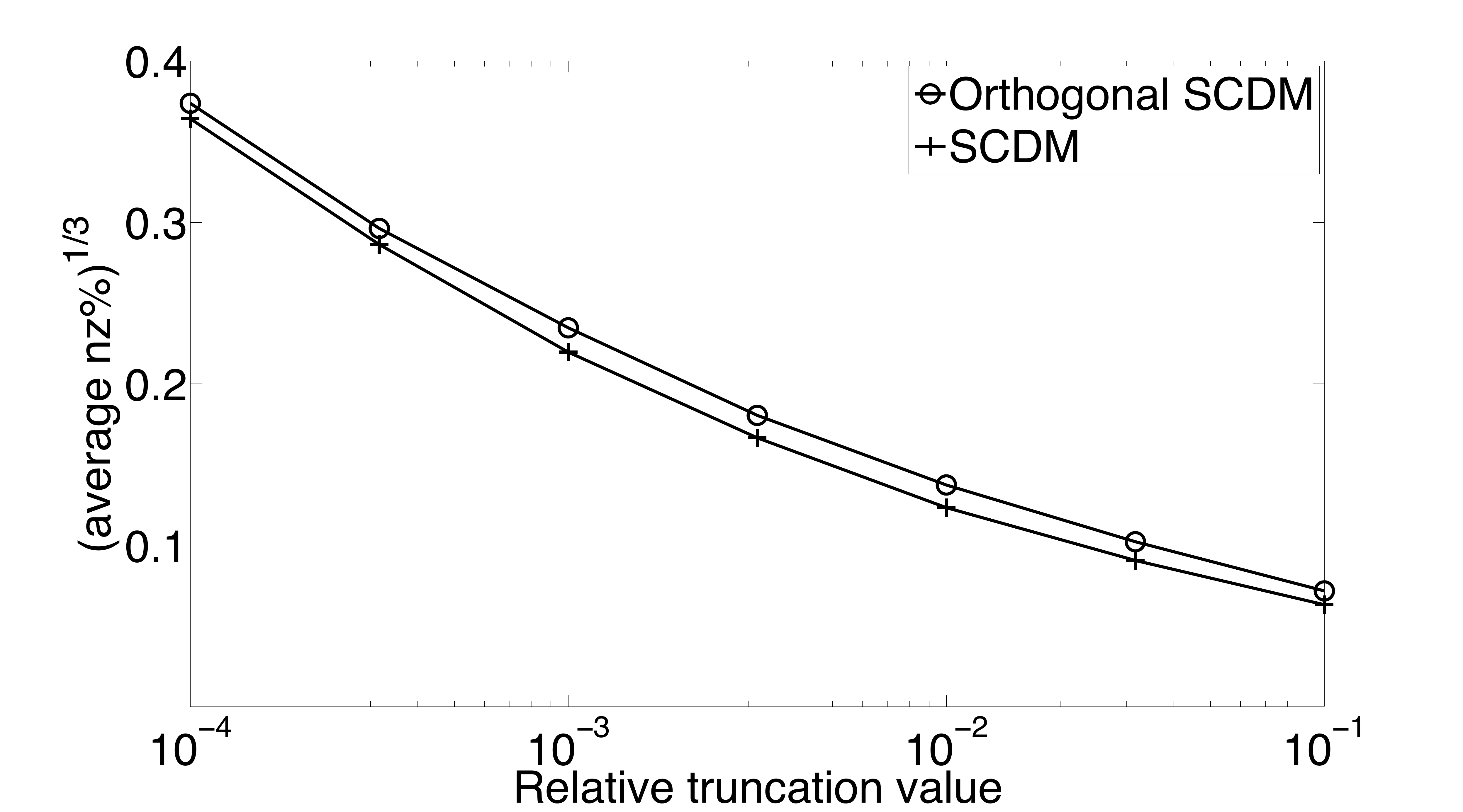}
\caption{\label{fig:3dnz} Average fraction of nonzero entries after
  truncation for the SCDM and the orthogonalized SCDM in three
  dimensions.}
\end{figure}

\subsection{Changing the local super cell size}
Rather than finding the selected columns using
wavefunctions defined on the entire global supercell $\Omega^g$, we find a good
approximation to these columns by using a much smaller local
supercell $\Omega^\ell$, whose size does not increase as the size of the global
supercell grows. \Rtwo{In fact, this is the key approximation made by our algorithm, and the only source of ``error'' when compared to our existing methods.}   Here we quantitatively study the dependence of the
quality of the columns selected by this local supercell approach. 

Concretely, we use a two dimensional problem with ${N_1 = N_2 = 16}$
$\vk$-points in each direction, $n_b = 3$, and $M_1 = M_2 = 20.$ We
proceeded to vary the size of the local supercell, \Rtwo{$N_1^\ell$ and $N_2^\ell,$} and compute the SCDM
and orthogonalized SCDM. We measure the locality as the fraction of
non-zero entries after truncation at a relative magnitude of
$10^{-2}.$ Figure \ref{fig:local} shows that the localization of both
the SCDM and the orthogonalized SCDM is nearly constant as the local
supercell size varies.

\begin{figure}[ht!]
\centering
\includegraphics[width=.75\textwidth]{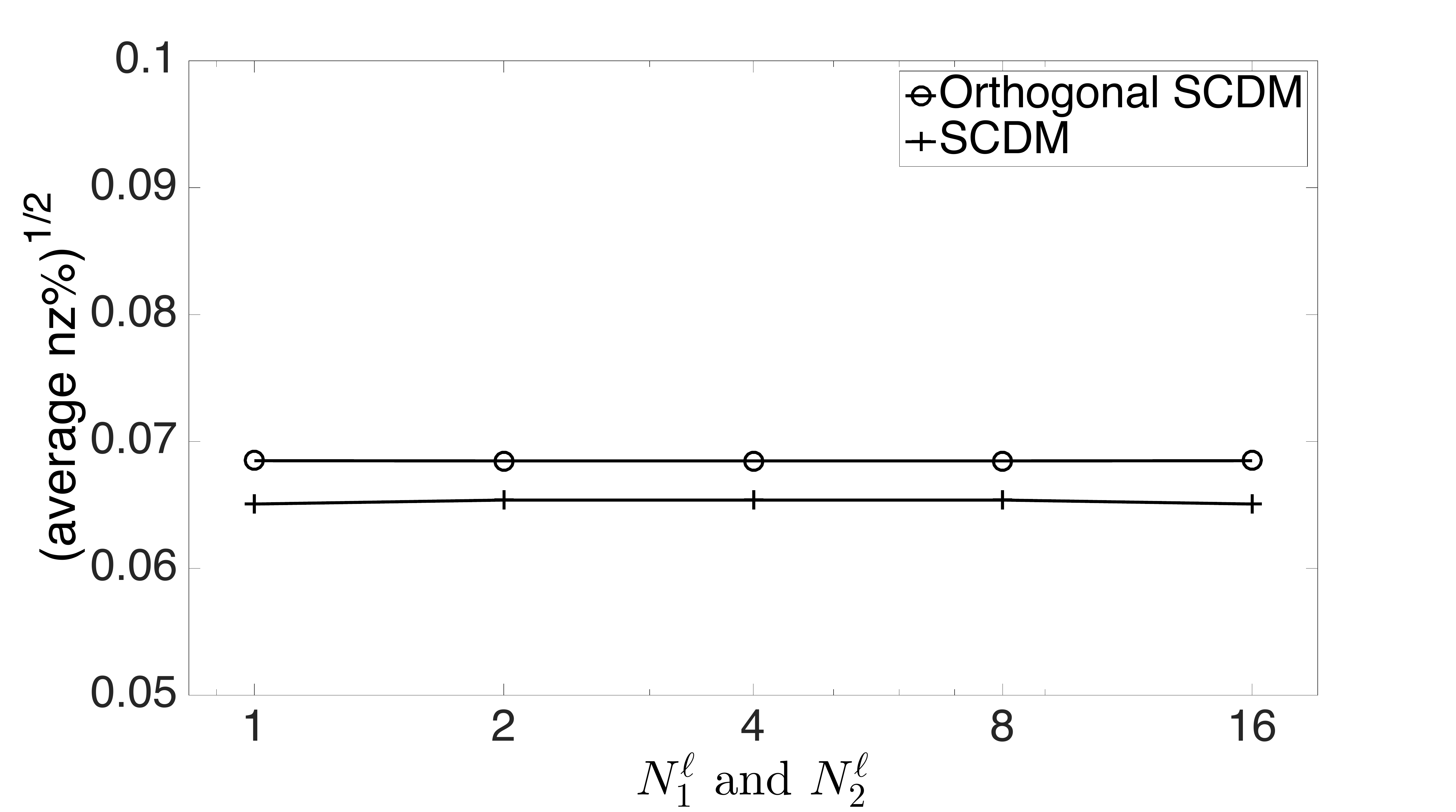}
\caption{\label{fig:local} locality as local supercell size grows}
\end{figure} 

\Rtwo{To compare against our existing methods and validate our approximation, we let the local supercell size grow to match that of the global supercell. This corresponds to running Algorithm~\ref{alg:scdmgamma} on the global problem and in Figure~\ref{fig:local} occurs at the rightmost point of the plot since $N_1=N_2 = 16$ and $N_1^\ell = N_2^\ell = 16$. Therefore, we observe that there is no noticable error introduced by our new algorithm: we get functions that are just as localized as if we treated the global problem directly.}

\subsection{Scaling with the number of $\vk$-points}

Finally, we demonstrate the computational scaling outlined in section
\ref{sec:scdm}. Here we consider a three dimensional problem and 
increase the total number of $\vk$-points in each direction. In this
experiment we used $M_1 = M_2 = M_3 = 10$ and $n_b = 4.$ Figure \ref{fig:3dscale} shows the time taken to
compute the orthogonalized SCDM as the number of $\vk$-points grows. 
In
this case, the terms linear in $\NS$ actually dominate the computation
and we observe close to linear scaling.
\begin{figure}[ht!]
\centering
\includegraphics[width=.75\textwidth]{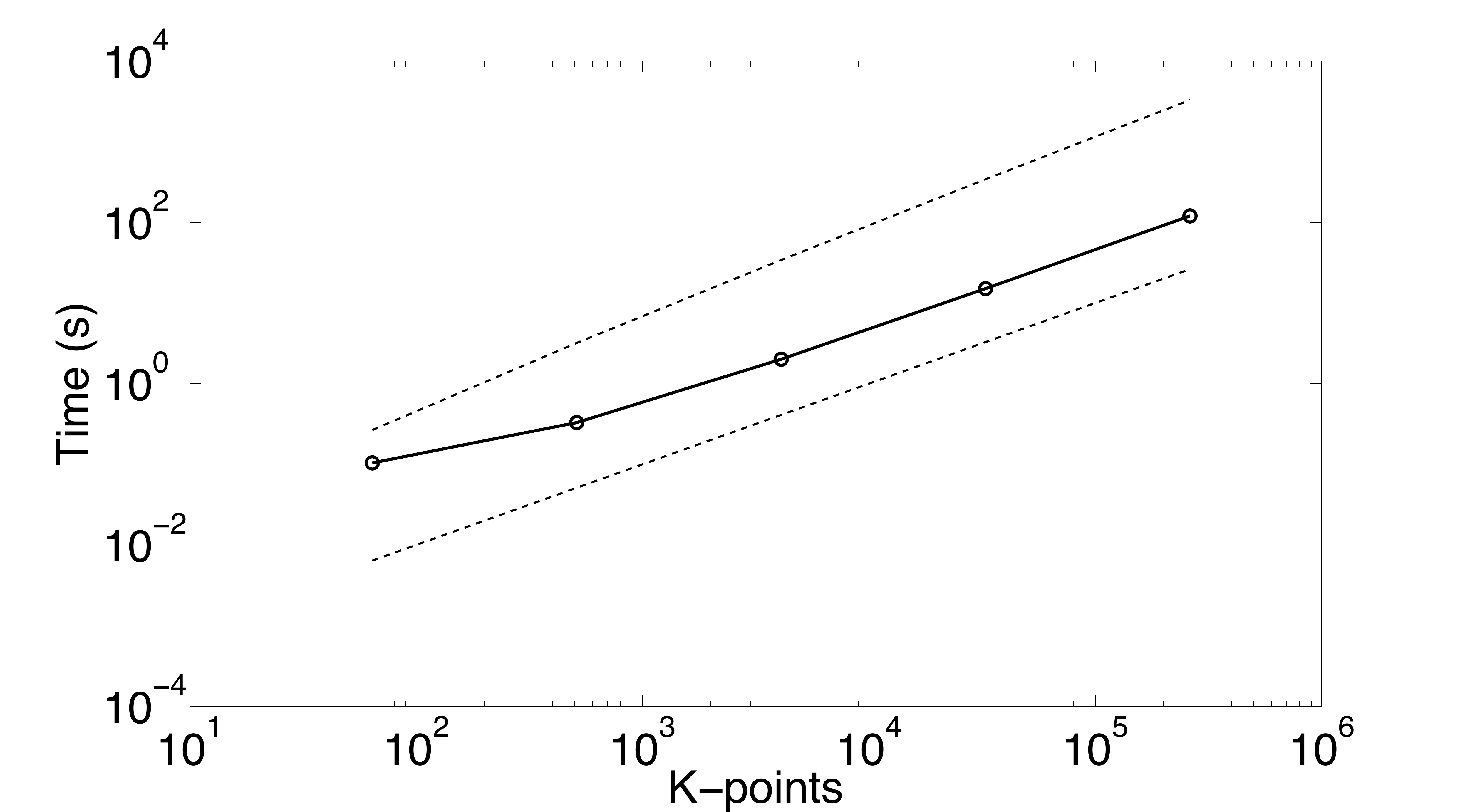}
\caption{\label{fig:3dscale} Time taken to compute the orthogonalized
SCDM as the number of $\vk$-points grows. The upper dotted line
represents $\Or(\NS \log \NS)$ scaling and the lower dotted line
represents $\Or(\NS)$ scaling}
\end{figure}

\section{Conclusion and future work}\label{sec:conclusion}

We developed the SCDM-k method, which is a new method for finding both
orthogonal and non-orthogonal localized orbitals from a set of Kohn-Sham
orbitals obtained from Brillouin zone sampling. The SCDM-k method is
implicitly based on the use of the gauge invariant density matrix, and
obtains localized orbitals without an iterative optimization procedure.
Furthermore, the computation exhibits $\Or(\NS \log \NS)$ scaling with
respect to the total number of $\vk$-points used. Numerical results for
two and three dimensional systems with model potentials indicate that
the SCDM-k method generates localized orbitals that can be visually 
similar to MLWFs.  All routines used in the SCDM-k method are standard
linear algebra routines and are thus easily parallelizable. This could
enable efficient computation of localized orbitals for solids both in
the post-processing step and performed on the fly. Though described using
a uniform real space grid for example, the SCDM-k method does not rely
on a particular basis set and can be combined with any electronic
structure software packages supporting $\vk$-point sampling in the
Brillouin zone.  We plan to apply the SCDM-k method to compute localized
orbitals and compare directly with MLWFs for Kohn-Sham DFT calculations
of real materials systems in the near future.

\section*{Acknowledgments}

The work of A.D. is partially supported by a Simons Graduate Research
Assistantship. The work of L.L. is partially supported by the DOE
Scientific Discovery through the Advanced Computing (SciDAC) program,
the DOE Center for Applied Mathematics for Energy Research
Applications (CAMERA) program, and by an Alfred P. Sloan
fellowship. The work of L.Y. is partially supported by the National
Science Foundation under grant DMS-0846501 and the DOE's Advanced
Scientific Computing Research program under grant
DE-FC02-13ER26134/DESC0009409. The authors thank Eric Bylaska, Sinisa
Coh, Felipe da Jornada and Bert de Jong for useful discussions, and the anonymous referees for their help in improving this manuscript.

\bibliographystyle{elsarticle-num} 
\bibliography{localizeband}

\end{document}